\newtheorem{lemma}{Lemma}
\newtheorem{theorem}{Theorem}
\newtheorem{cor}{Corollary}
\newtheorem{observation}{Observation}
\newcommand{\opt}{\texttt{OPT}}
 \let\@fnsymbol\@arabic
\begin{document}
\title{Minimum Forcing Sets for 1D Origami\footnote{E. Demaine supported in part by NSF ODISSEI grant EFRI-1240383 and NSF Expedition grant CCF-1138967. T. Hull supported by NSF ODISSEI grant EFRI-1240441.}}
 
\author{Mirela Damian,\thanks{Dept. of Computer Science, Villanova University, Villanova, PA 19085, USA, mirela.damian@villanova.edu}\ \
Erik Demaine,\thanks{CSAIL,  Massachusetts Institute of Technology, Cambridge, MA 02139, USA, edemaine@mit.edu}\ \ 
Muriel Dulieu,\thanks{mdulieu@gmail.com}\ \ 
Robin Flatland,\thanks{Dept. of Computer Science, Siena College, Loudonville, NY 12211, USA, flatland@siena.edu}\ \ 
Hella Hoffmann,\thanks{David R. Cheriton School of Computer Science, University of Waterloo, Waterloo, ON Canada N2L 3G1, hrhoffmann@uwaterloo.ca}\ \ 
Thomas C. Hull,\thanks{Dept. of Math, Western New England University, Springfield, MA 01119, USA, thull@wne.edu}\ \ 
Jayson Lynch,\thanks{CSAIL,  Massachusetts Institute of Technology, Cambridge, MA 02139, USA, jaysonl@mit.edu}\ \ 
and
Suneeta Ramaswami\thanks{Dept. of Computer Science, Rutgers University, Camden, NJ 08102, USA, rsuneeta@camden.rutgers.edu}}


\date{May 16, 2015}

\maketitle 



\begin{abstract}
This paper addresses the problem of finding minimum forcing sets in origami. 
The origami material folds flat along straight lines called \emph{creases} that can be labeled as mountains or valleys. 
A \emph{forcing set} is a subset of creases that force all the other creases to fold according to their labels. The result is 
a flat folding of the origami material. 
In this paper we develop a linear time algorithm that finds \emph{minimum} forcing sets in one dimensional origami. 
\end{abstract}

\noindent{\bf Keywords:} origami, flat folding, forcing sets


\section{Introduction}

In computational origami, the topic of self-folding origami, where a thin material folds itself in response to some stimulus or mechanism, has been gaining in popularity (see~\cite{Hawkes+10,Ionov13,Leong+07,MR+05}).  The origami material can be programmed to fold along straight lines, called \emph{creases}, by means of rotation. As such, the self-folding process can be economized by only programming a subset of the creases to self-fold, which would then \emph{force} the other, passive creases to fold as originally intended.  We call this subset of creases a {\em forcing set}. Finding a forcing set of smallest size in a given crease pattern would be a useful tool for self-folding.  

This paper addresses the one-dimensional (1D) variant of the forcing set problem, in which the piece of paper reduces to a horizontal line segment in $\mathbb{R}$, and creases reduce to points on the line segment. Our main result is an algorithm that finds a minimum forcing set of any given 1D folding pattern, in time linear in the number of input creases. 
This is the first work to consider finding forcing sets for arbitrary crease patterns.
The forcing set problem was only recently introduced in the literature by Ballinger et al.~\cite{BDE+15} in a paper that considers the problem for one particular class of two-dimensional (2D) foldable patterns called \emph{Miura-ori}. The authors present an algorithm for finding a minimum forcing set of a Miura-ori map in time quadratic in the total number of creases. It is worth noting that the results from~\cite{BDE+15} are specific to the Miura-ori map crease pattern and do not transfer to arbitrary crease patterns (of any dimension), which is what we investigate in this paper for the 1D case. 

Although the forcing set problem was introduced recently and so there is little prior work on it, 
there is much related work on flat-foldable origami, which addresses 
  the problem of determining if a given crease pattern can be folded
  to a flat configuration. 
Arkin et al.~\cite{Arkin04} give a linear time algorithm for determining if
a 1D crease pattern is flat-foldable.  In~\cite{BerHay-SODA-96}, Bern and Hayes show how to determine in linear time whether a general
crease pattern has a mountain-valley assignment for which every node 
is locally flat-foldable, and they prove that deciding whether a crease pattern is globally flat-foldable is NP-hard.  For crease patterns consisting of a regular $m \times n$ grid of squares, the complexity of deciding whether a given mountain-valley assignment can be folded flat (the \emph{map folding problem}) remains open~\cite{DOR07}, although recent progress has been made on $2 \times n$ grids~\cite{Morgan12} and on testing for valid
linear orderings of the faces in a flat-folding~\cite{NW13}. In~\cite{Hull03CountingMV}, Hull gives upper and lower bounds on the number of flat-foldable mountain-valley assignments on a single-node 
crease pattern.

%
In 1D origami, the horizontal line segment representing the piece of paper is oriented, so we can talk about the \emph{top} of the segment (normal pointing upward) and the \emph{bottom} of the segment (normal pointing downward). 

A \emph{crease} is a point on the segment. A \emph{fold} bends the segment $\pm 180^\circ$ about a crease. 
Folds can be of two types, mountain ($M$) and valley ($V$). A mountain (valley) fold about a crease brings together the bottom (top) sides of the left and right segments adjacent to the crease.  We imagine the paper to be sticky on both sides, so that after a fold the paper will stick and cannot be undone.

A \emph{crease pattern} on a segment with endpoints $c_0,  c_n \in \mathbb{R}$  is a strictly increasing sequence of points $C=(c_0, c_1, ..., c_n)$. We refer to $c_1, c_2, \ldots, c_{n-1}$ as \emph{creases}, and to $c_0, c_n$ as \emph{ends}. (Note that $c_0$ and $c_n$ are not creases; we use similar notation for creases and ends for consistency.) For a pair of consecutive creases $c_i$ and $c_{i+1}$, we will refer to $[c_i, c_{i+1}]$ as an \emph{interval}. 

A {\em mountain-valley assignment} (or simply an $MV$ \emph{assignment}) on the crease pattern $C$ is a function $\mu:\{c_1,...,c_{n-1}\} \rightarrow \{$M,V$\}$.  Note that an $MV$ assignment restricts the way in which each crease can be folded. A crease pattern $C$ together with an $MV$ assignment on $C$ forms an \emph{$MV$ pattern}. A \emph{flat folding} of an $MV$ pattern ($C, \mu$) is a piecewise-isometric embedding of the line segment $[c_0, c_n]$, bent along every crease $c_i$, for $0 < i < n$, according to the assignment $\mu(c_i)$, and not bent along any point that is not a crease.  In particular, each interval $[c_i, c_{i+1}]$ of the crease pattern $C$ must be mapped to a congruent copy, and the connectivity between intervals must be preserved. Furthermore, as the segments are folded flat in accordance with the $MV$ assignment, they must not cross one another (although they may overlap). 
%
For a fixed crease pattern $C$, an $MV$ assignment $\mu$ on $C$ is \emph{foldable} if there is a flat folding of the $MV$ pattern ($C, \mu$). Two creases with assignments $M$ and $V$ are said to have \emph{opposite} $MV$ \emph{parity}. 
For a crease pattern $C$, we say that $MV$ assignments $\mu_1$ and $\mu_2$ \emph{agree} with each other on $F \subseteq C$ if 
for each crease $c \in F$, $\mu_1(c) = \mu_2(c)$. 

Given an $MV$ pattern ($C, \mu$), we say that a subset $F$ of $C$ is {\em forcing} if the only foldable $MV$ assignment on $C$ that agrees with $\mu$ on $F$ is $\mu$ itself. This means that, if each crease $c \in F$ is assigned the value $\mu(c)$, then each crease $c' \in C \setminus F$ must be assigned the value $\mu(c')$, in order to produce a foldable $MV$ assignment. 
As an example, consider the $MV$ pattern depicted in~\autoref{fig:crimppair}a, 
and imagine that our crease pattern $C = (c_{i-1}, c_i, c_{i+1}, c_{i+2})$ contains only these four creases. Thus $c_{i-1}$ and $c_{i+2}$ are ends, and the segment folds about $c_i$ and $c_{i+1}$ according to the assignment $\mu(c_i) = M$ and $\mu(c_{i+1}) = V$. 
Assume that the two end intervals $[c_{i-1}, c_i]$ and $[c_{i+1}, c_{i+1}]$ are both strictly longer than the middle interval $[c_i, c_{i+1}]$. Then a forcing set for this example is $F = \{c_{i+1}\}$, because the only foldable assignment that agrees with $\mu(c_{i+1}) = V$ is the one that assigns $c_{i}$ to be an $M$ crease; otherwise, if $c_{i}$ was also a $V$ crease, then the bigger intervals $[c_{i-1},c_i]$ and $[c_{i+1},c_{i+2}]$ would both cover the smaller interval $[c_i,c_{i+1}]$ on the same side, causing a self-intersection (as depicted in~\autoref{fig:crimppair}b). Similar arguments show that $F = \{c_{i}\}$ is also a forcing set, because it forces $c_{i+1}$ to be a $V$ crease, otherwise the assignment would not be foldable. 

A forcing set $F$ is called {\em minimum} if there is no other forcing set with size less than $|F|$.  
In this paper we present a linear time algorithm that takes as input a flat-foldable $MV$ pattern ($C, \mu$) and produces as output a minimum forcing set $F$ of $C$. 

\section{Preliminaries}
This section introduces some terminology and preliminary results that will be used in constructing a minimum forcing set in~\autoref{sec:algorithm} and in proving the correctness of the forcing set algorithm in~\autoref{sec:correct}. 

Throughout the paper we work with a flat-foldable $MV$ pattern $(C, \mu)$, where $C=(c_0, c_1, ..., c_n)$ is a one-dimensional crease pattern and $\mu$ is a foldable $MV$ assignment. So whenever we talk about the $MV$ pattern $(C, \mu)$, we assume that it is \emph{flat-foldable}, unless otherwise specified.  

\subsection{Crimpable Sequences}
\label{sec:crimps}
Given a $MV$ pattern $(C, \mu)$,
our  algorithm (described in~\autoref{sec:algorithm}) repeatedly identifies \emph{crimpable} sequences of creases that can be used to force $MV$ folds. A \emph{crimpable} sequence is composed of consecutive equidistant creases, say distance $d$ apart, with the property that the two intervals adjacent to the left and right end of the sequence are strictly longer than $d$. Formally, for integers $i , k > 0$, a sequence of consecutive creases $(c_i, c_{i+1}, \ldots, c_{i+k})$ is crimpable if 
$|c_{i+1} - c_i| = |c_{i+2}-c_{i+1}| = \ldots = |c_{i+k} - c_{i+k-1}|$ and 
$|c_i-c_{i-1}| > |c_{i+1}-c_i| < |c_{i+k+1}-c_{i+k}|$.

\begin{figure}[hptb]
\centerline{\includegraphics[width=.75\textwidth]{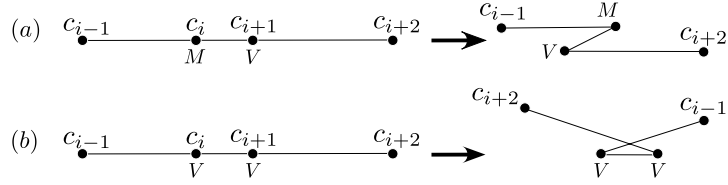}}
\caption{A monocrimp operation on a crimpable sequence $(c_i, c_{i+1})$.  (a) $c_i$, $c_{i+1}$ have opposite $MV$ parity (b) $c_i$, $c_{i+1}$ have the same $MV$ parity; self intersection is unavoidable.}\label{fig:crimppair}
\end{figure}
%

A \emph{monocrimp} operation folds the paper about a single pair of consecutive creases of opposite $MV$ parity in a  crimpable sequence, according to the $MV$ assignment $\mu$. 
~\autoref{fig:crimppair}a depicts a monocrimp operation performed on a pair of creases $(c_i, c_{i+1})$.  
The three intervals involved in the monocrimp operation will fuse together into a new interval with endpoints $c_{i-1}$ and $c_{i+2}$. The 
result is a new crease pattern, reduced from the one before the crimp.  
Note that the notions of top and bottom need to be redefined for the fused segment, so that the top faces upwards and the bottom faces downwards. 
After the monocrimp, the creases $c_i$ and $c_{i+1}$ will disappear and the three intervals from $c_{i-1}$ to $c_{i+2}$ will be replaced with an interval of length 
\begin{equation}
c_i - c_{i-1} - (c_{i+1}-c_i) + c_{i+2}-c_{i+1} = c_{i+2} - 2c_{i+1} +2c_i - c_{i-1}.
\label{eq:crimplen}
\end{equation}

\noindent
One important property of the monocrimp operation, which will be used in our  algorithm, is captured by the following lemma. 
\begin{lemma}
After a monocrimp is performed on a pair of creases $(c_i, c_{i+1})$
in a crimpable sequence, the interval to the right of $c_{i-1}$
(which is also the interval to the left of $c_{i+2}$) can get longer or remain the same size. 
\label{lem:crimpgrow}
\end{lemma}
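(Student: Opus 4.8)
The plan is to reduce the claim to a single length computation driven by Equation~(\ref{eq:crimplen}), using only the defining inequalities of a crimpable sequence. First I would fix notation: let $d = c_{i+1} - c_i$ be the common spacing of the crimpable sequence $(c_i, c_{i+1}, \ldots, c_{i+k})$, and let $a = c_i - c_{i-1}$ denote the length of the interval to the right of $c_{i-1}$ \emph{before} the monocrimp. The left crimpability inequality immediately gives $a > d$, and this is the only structural fact about the left end I will need.

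Next I would write down the length of the new fused interval. Substituting $c_{i+1} - c_i = d$ into Equation~(\ref{eq:crimplen}), the interval to the right of $c_{i-1}$ after the crimp has length
\[
L = (c_i - c_{i-1}) - (c_{i+1} - c_i) + (c_{i+2} - c_{i+1}) = a - d + (c_{i+2} - c_{i+1}),
\]
so the change in the length of the interval abutting $c_{i-1}$ is exactly $L - a = (c_{i+2} - c_{i+1}) - d$. Thus the whole statement collapses to bounding the one quantity $c_{i+2} - c_{i+1}$ from below by $d$.

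The key step is a short case analysis on where $c_{i+2}$ sits in the sequence. If $k > 1$, then $c_{i+2}$ is still an interior crease of the equidistant run, so $c_{i+2} - c_{i+1} = d$ and hence $L = a$: the interval retains its size. If $k = 1$, then $(c_i, c_{i+1})$ is the entire sequence and $c_{i+2}$ plays the role of $c_{i+k+1}$, so the right crimpability inequality $c_{i+1} - c_i < c_{i+2} - c_{i+1}$ yields $c_{i+2} - c_{i+1} > d$ and therefore $L > a$: the interval strictly grows. In both cases $L \ge a$, which is exactly the claim. By the symmetry flagged in the statement — the new interval is simultaneously the interval to the right of $c_{i-1}$ and to the left of $c_{i+2}$ — the identical reasoning, with the roles of the two flanking intervals swapped, also shows it never shrinks relative to the former interval $[c_{i+1}, c_{i+2}]$.

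I do not anticipate a real obstacle, since the argument is one substitution followed by a binary case split; the only point demanding care is correctly identifying the pre-crimp interval that serves as the baseline for ``longer or the same.'' It is the crimpability spacing condition that produces the cancellation $-d + d = 0$ in the interior case and the strict gain $-d + (c_{i+2}-c_{i+1}) > 0$ in the endpoint case, so I would be careful to invoke exactly the right one of the two crimpability inequalities in each branch.
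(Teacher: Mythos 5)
Your proof is correct and follows essentially the same route as the paper's: apply Equation~(\ref{eq:crimplen}) to see that the change in length is $(c_{i+2}-c_{i+1})-(c_{i+1}-c_i)$, and bound this below by $0$ using the defining inequalities of a crimpable sequence (equality when $c_{i+2}$ is still inside the equidistant run, strict inequality when $c_{i+1}$ is its last crease). The only cosmetic caveat is that your indexing assumes the crimped pair sits at the start of the crimpable sequence — in general it can be any adjacent opposite-parity pair, so the unused claim $a>d$ need not hold — but the dichotomy you actually use (is $c_{i+1}$ the last crease or not) is exactly the right one and the argument goes through unchanged.
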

\begin{proof}
By the definition of a monocrimp, $c_i$ and $c_{i+1}$ belong to a crimpable sequence, therefore
\begin{equation}
c_{i}-c_{i-1} \ge c_{i+1}-c_i \le c_{i+2}-c_{i+1}
\label{eq:monocrimplen}
\end{equation}
The left (right) inequality from~(\ref{eq:monocrimplen}) is strict if $c_i$ ($c_{i+1}$) is the first (last) crease in the crimpable sequence, otherwise the two terms on either side are equal (by the definition of a crimpable sequence). After the monocrimp both $c_i$ and $c_{i+1}$ disappear, and the intervals from $c_{i-1}$ to $c_{i+2}$ are replaced by a single interval of length $c_{i+2} - 2c_{i+1} +2c_i - c_{i-1}$ (see~(\ref{eq:crimplen})). 
Thus the interval to the right of $c_{i-1}$ changes size from $c_{i}-c_{i-1}$ to  
\begin{eqnarray*}
c_{i+2} - 2c_{i+1} +2c_i - c_{i-1} & = & c_i-c_{i-1} + (c_{i+2}-c_{i+1}) - (c_{i+1}-c_i)  \\
						& \ge& c_i-c_{i-1}.   \mbox{~~~~(by~(\ref{eq:monocrimplen}))}
\end{eqnarray*}
This latter inequality follows from the definition of a crimpable pair, by which $c_{i+2}-c_{i+1} \ge c_{i+1}-c_i$. 
An analogous argument holds 
for the interval to the left of $c_{i+2}$. 
%
\end{proof}

\noindent
The following corollary follows immediately from~\autoref{lem:crimpgrow}.
\begin{cor}
Let $\alpha = (c_i, c_{i+1}, \dots, c_{i+k})$ be a crimpable sequence in $C$. If $k > 2$, the result of a monocrimp operation performed on $\alpha$ is a crimpable sequence with two fewer creases. 
\label{cor:intervalgrow}
\end{cor}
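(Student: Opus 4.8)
The plan is to represent the crimpable sequence $\alpha$ entirely by its profile of consecutive interval lengths and to track how a monocrimp rewrites this profile. Let $d$ denote the common length of the $k$ internal intervals of $\alpha$, and let $L$ and $R$ be the lengths of the two boundary intervals $[c_{i-1}, c_i]$ and $[c_{i+k}, c_{i+k+1}]$. The crimpable hypothesis says exactly that the length profile reads $L, d, d, \ldots, d, R$, with $k$ copies of $d$ and $L > d < R$. Because a monocrimp folds about a pair of consecutive creases that both lie in $\alpha$, the interval it is applied to is always one of these internal intervals, of length $d$; and by~(\ref{eq:crimplen}) the three fused intervals collapse into a single interval whose length equals the left neighbor's length, minus $d$, plus the right neighbor's length. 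So the whole claim reduces to checking that this rewrite again yields a crimpable profile, now with two fewer creases.

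I would then split into cases according to which internal interval is crimped. If the crimped interval is strictly interior, meaning both of its neighbors are themselves internal, then both neighbors have length $d$, so the fused interval has length $d - d + d = d$ exactly; the boundaries $L$ and $R$ are untouched, and the new profile is $L, d, \ldots, d, R$ with $k-2$ copies of $d$. If instead the leftmost internal interval is crimped, its left neighbor is the boundary interval of length $L$ and its right neighbor has length $d$, so the fused interval has length $L - d + d = L > d$ and takes over as the new left boundary, while the right boundary $R$ and all remaining internal intervals are unchanged; the rightmost case is symmetric, producing a new right boundary of length $R$. (In these two cases~\autoref{lem:crimpgrow} already certifies that the fused interval is at least as long as the old boundary, hence strictly longer than $d$.) In every case the resulting profile has all of its internal intervals equal to $d$ and both boundary intervals strictly larger than $d$, so it is again crimpable.

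Finally I would tally creases: a monocrimp deletes precisely the two creases of the folded pair, so the $k+1$ creases of $\alpha$ drop to $k-1$, which is the claimed ``two fewer.'' The one place the hypothesis $k > 2$ is genuinely needed is in guaranteeing that the reduced profile still contains at least one internal interval of length $d$, so that it qualifies as a crimpable sequence at all; this requires $k-2 \ge 1$, i.e.\ $k \ge 3$. I expect the boundary cases to be the only real subtlety, since there the fused interval switches role from internal to boundary, and one must confirm that its strict inequality against $d$ is inherited from the original long boundary rather than merely from the equal internal intervals.
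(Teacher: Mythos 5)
Your proposal is correct and takes essentially the same route the paper intends: the paper states the corollary as following immediately from \autoref{lem:crimpgrow}, and your argument is just the explicit interval-length bookkeeping (via equation~(\ref{eq:crimplen})) behind that claim, including the correct observation that $k>2$ is needed only so that at least one internal interval of length $d$ survives. The case analysis over interior versus end pairs and the crease count $k+1 \to k-1$ are all accurate.
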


Note that a monocrimp operation performed on a crimpable sequence $\alpha = (c_i, c_{i+1})$ requires $c_i$ and $c_{i+1}$ to have 
opposite $MV$ parity, for if they were both $M$ or both $V$, then the bigger intervals $[c_{i-1},c_i]$ and $[c_{i+1},c_{i+2}]$ would both cover the smaller interval $[c_i,c_{i+1}]$ on the same side, causing a self-intersection.  This leads to the key observation that, if $\mu(c_i) = M$, then it must be that $\mu(c_{i+1}) = V$, otherwise $\mu$ is not foldable. Thus we can assign one of the creases in the pair to match the assignment in $\mu$,  and the other crease will be forced to have opposite $MV$ parity. 

We will use a generalization of this property in constructing a minimum forcing set. 
\autoref{thm:crimpforce} below originally appeared in an equivalent form as Theorem~4 in~\cite{Hull03CountingMV}, with angles corresponding to our intervals. It quantifies the difference in the number of $M$ and $V$ assignments for creases in crimpable sequences of even and odd length.
\begin{figure}[hp]
    \centering
   \includegraphics[width=.85\textwidth]{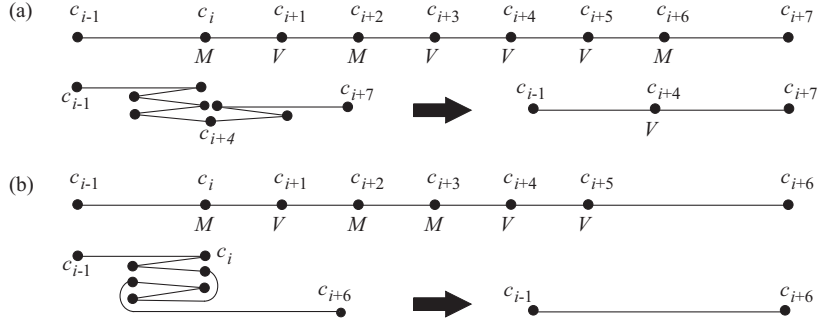}
    \caption{ Crimp operation on an (a)  odd number and (b) even number of creases. Below each crease pattern is the result of the crimp with folds shown (left) and with the folds fused together (right).}
    \label{fig:crimpseq}
\end{figure}

\begin{theorem}[Theorem~4 from~\cite{Hull03CountingMV}]
Let $\alpha$ be a crimpable sequence in a foldable $MV$ pattern. The difference in the number of $M$ and $V$ assignments for the creases in $\alpha$ is zero (one) if $\alpha$ has an even (odd) number of creases. 
\label{thm:crimpforce}
\end{theorem}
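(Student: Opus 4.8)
The plan is to split the statement into a trivial parity part and a substantive magnitude part, and then prove the magnitude part by induction on the number $m$ of creases in $\alpha$ using the monocrimp reduction. First observe that if $\alpha$ has $m$ creases with $p$ of them labeled $M$, then the signed difference $\#M-\#V = 2p-m$ has the same parity as $m$, so the parity claim (the difference is even exactly when $m$ is even) holds automatically for \emph{any} labeling, foldable or not. Hence the entire content of the theorem is the bound $|\#M-\#V|\le 1$; together with the parity observation this yields difference $0$ for even $m$ and $\pm 1$ for odd $m$.

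For the magnitude bound I would induct on $m$ with the monocrimp as the inductive step. A monocrimp folds one adjacent pair of creases of \emph{opposite} parity, so it deletes exactly one $M$ and one $V$; thus it leaves $\#M-\#V$ unchanged and decreases $m$ by two. By \autoref{cor:intervalgrow} for a pair at the end of $\alpha$, and by the identical length computation of \autoref{lem:crimpgrow} for an interior pair (the fused interval has length $d-d+d=d$, so the equidistant run and its longer bounding intervals persist), the result is again a crimpable sequence, and it is still foldable because the reduced pattern is flat-folded by the original folding with the crimped pair held rigid. The induction therefore reduces $\alpha$ to a crimpable sequence with $m\in\{2,3\}$ creases while preserving $\#M-\#V$; at $m=2$ a non-monochromatic labeling has one $M$ and one $V$ (difference $0$), and at $m=3$ a non-monochromatic labeling is a $2$--$1$ split (difference $\pm1$), which closes the induction.

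The one ingredient that makes this engine run is the following \emph{key lemma}, which is also where I expect the real difficulty: in a foldable crimpable sequence with at least two creases the labeling is never monochromatic, i.e.\ there is always an adjacent pair of opposite parity to crimp. The plan for the key lemma is to work in the flat-folding layer model: all interior unit intervals of $\alpha$ project onto one common segment and are stacked at distinct heights $H_0,\dots,H_{k+1}$, while the two longer bounding intervals protrude as \emph{tortillas}. Tracking the alternating orientation of the paper shows that the label of crease $t$ is governed by $\mathrm{sign}(H_{t+1}-H_t)$, and that a monochromatic labeling forces the heights into a strict zigzag $H_0>H_1<H_2>H_3<\cdots$. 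One then derives a contradiction from the non-crossing (taco--taco and taco--tortilla) conditions: applied to the protruding bounding flaps, these conditions cascade to push those two flaps to opposite extremes of the layer order, which is incompatible with the zigzag (or else forces two tacos at a common wall to interleave, hence cross). The delicate point, and the main obstacle, is to make this cascading extremal argument uniform across all lengths and both parities of $m$, rather than only verifying the small cases $m=2,3$ where the bound is immediate.
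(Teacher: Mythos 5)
First, a point of reference: the paper does not prove this statement at all --- it imports it verbatim as Theorem~4 of the cited Hull paper and uses it as a black box, so there is no in-paper proof to compare your attempt against. Judged on its own, your skeleton is sound and is the standard one. The parity observation ($\#M-\#V = 2p-m$) correctly reduces the theorem to the bound $|\#M-\#V|\le 1$, and the monocrimp induction is the right engine: each monocrimp deletes one $M$ and one $V$, \autoref{cor:intervalgrow} (together with your length computation for an interior pair) keeps the sequence crimpable, and the difference is preserved. One secondary soft spot: your one-line justification that the reduced pattern is still foldable (``hold the crimped pair rigid'') glosses over layers of other paper that may be threaded through the crimp's pocket; this is true but deserves an argument, e.g.\ via \autoref{lem:foldbycrimps} or a taco--tortilla analysis of the pocket's mouth.

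The genuine gap is the one you flag yourself: the key lemma --- that a foldable crimpable sequence with at least two creases is never monochromatic, equivalently always contains an adjacent opposite-parity pair to crimp --- is stated but not proved, and it carries essentially all of the theorem's content (without it the induction cannot even take its first step, nor handle its base cases $m=2,3$). Your proposed route is the correct one: a monochromatic assignment forces the stacked short intervals into a zigzag of heights, the two long bounding flaps protrude past the stack at the two walls, and the taco--tortilla conditions at one wall cascade to force the flaps into positions that interleave a taco--taco pair at the other wall (this checks out concretely for $m=2$ and $m=3$: e.g.\ for three all-$M$ creases one derives $h_0<h_2<h_1<h_3$ at wall $d$, which interleaves the two tacos at wall $0$). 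But the uniform cascading argument for arbitrary $m$ is precisely Hull's Theorem~4, and it is the step you have not written down. As it stands, the proposal is a correct reduction plus an honest IOU for the substantive claim, so it does not yet constitute a proof.
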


\noindent
This theorem allows us to define a \emph{crimp} operation on a crimpable sequence $\alpha = (c_i, c_{i+1}, \dots, c_{i+k})$  as an \emph{exhaustive} sequence of 
monocrimps performed in $\alpha$; here by \emph{exhaustive} we mean that no monocrimp operation can be further performed on the sequence. 
Next we show that a crimp operation can always be performed on a crimpable sequence with two or more creases. 

%
\begin{lemma}
Let $\alpha$ be a crimpable sequence in a foldable crease pattern $C$. 
A crimp operation on $\alpha$ reduces $\alpha$ to a null sequence if
the length of $\alpha$ is even, and to a single crease $c \in \alpha$
if it is odd. In the latter case, the  
position of $c$  and the interval distances within the resulted crease pattern are independent of the $MV$ assignment. 
\label{lem:crimp}
\end{lemma}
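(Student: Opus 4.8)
The plan is to induct on the number $m$ of creases of $\alpha$. Writing $\alpha=(c_i,\dots,c_{i+k})$ so that $m=k+1\ge 2$, I would isolate the three geometric quantities that govern everything: the common interior spacing $d=|c_{i+1}-c_i|$, and the lengths $L:=c_i-c_{i-1}$ and $R:=c_{i+k+1}-c_{i+k}$ of the two long intervals flanking $\alpha$, both strictly larger than $d$ by the definition of a crimpable sequence. The inductive engine is the claim that a single monocrimp always (i) can be performed while $m\ge 2$, and (ii) lowers $m$ by two while leaving a crimpable sequence carrying the \emph{same} triple $(L,R,d)$. Granting this, the parity statement is immediate, and the position statement reduces to tracking $L$ and $R$ through the reduction.

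First I would show a monocrimp is available whenever $m\ge 2$. Because $(C,\mu)$ is foldable, Theorem~\ref{thm:crimpforce} forces the creases of $\alpha$ to split $m/2$ to $m/2$ when $m$ is even and $(m+1)/2$ to $(m-1)/2$ when $m$ is odd; in either case, for $m\ge 2$ both labels $M$ and $V$ occur in $\alpha$, so reading along $\alpha$ there is at least one adjacent pair of opposite parity, which is exactly a monocrimpable pair. Performing that monocrimp realizes part of the flat folding of $(C,\mu)$, so the restriction of $\mu$ to the surviving creases still admits a flat folding and the reduced pattern is again foldable; hence the argument repeats. Combined with \autoref{cor:intervalgrow}, this shows the crimp operation never halts with two or more creases remaining, so it terminates with exactly $m \bmod 2$ creases of $\alpha$, namely none when $m$ is even and one when $m$ is odd.

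Next I would establish the $(L,R,d)$ invariant using the fusion-length formula~(\ref{eq:crimplen}), splitting the monocrimpable pair into three cases. For an interior pair all three fused intervals have length $d$, so the fused interval again has length $d-d+d=d$; for the left end pair the fused length is $L-d+d=L$, and for the right end pair it is $d-d+R=R$. In every case the surviving creases remain equidistant with spacing $d$, the left flanking interval keeps length $L$ and the right flanking interval keeps length $R$, so the result is a crimpable sequence with the same triple and two fewer creases. By induction the fully crimped pattern depends only on $(L,R,d)$ and on the parity of $m$: when $m$ is odd, the single surviving crease $c$ is flanked, in the resulting pattern, by intervals of length exactly $L$ on its left and $R$ on its right, every interval outside $\alpha$ keeps its length, and the crimped region $[c_{i-1},c_{i+k+1}]$ contributes total length $L+R$ in place of $L+(m-1)d+R$. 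None of these quantities involves $\mu$, which is the asserted independence.

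The hard part will be the termination and confluence bookkeeping underlying the last two paragraphs: I must guarantee both that a monocrimpable pair always exists (so the operation genuinely exhausts $\alpha$ rather than stalling) and that the survivor's flanking lengths are independent not merely of $\mu$ but also of the order in which opposite-parity pairs are crimped. The $(L,R,d)$ invariant resolves both at once: since every monocrimp leaves this triple and the equidistant structure intact, the terminal configuration is forced regardless of which pair is crimped at each step, so for odd $m$ the surviving crease lands at distance $L$ from $c_{i-1}$ and $R$ from $c_{i+k+1}$ in the reduced crease pattern, making its position and the surrounding interval distances well defined and independent of the $MV$ assignment.
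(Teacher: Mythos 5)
Your proposal is correct and follows essentially the same route as the paper: use \autoref{thm:crimpforce} to guarantee an adjacent opposite-parity pair, monocrimp it, invoke the fusion-length formula~(\ref{eq:crimplen}) and \autoref{cor:intervalgrow} to see the result is again crimpable with two fewer creases, and iterate until parity leaves zero or one survivors. Your explicit $(L,R,d)$ invariant is just a sharper, case-by-case restatement of what the paper gets from \autoref{lem:crimpgrow} and \autoref{cor:intervalgrow}, and it makes the order-independence of the survivor's position and of the resulting interval lengths more transparent than the paper's brief justification.
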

\begin{proof}
Let $\alpha = (c_i, c_{i+1}, \dots, c_{i+k})$, for a fixed $k > 0$.
By~\autoref{thm:crimpforce}, among creases $c_i, c_{i+1}, \dots, c_{i+k}$,  there is an adjacent pair with opposite $MV$ assignments. A monocrimp operation on this pair of creases eliminates the creases (thus eliminating one $M$ and one $V$ crease), and the result is a crimpable sequence with two fewer creases (by~\autoref{cor:intervalgrow}). So we can repeat this process until there are zero (if $k$ is odd) or one (if $k$ is even) creases remaining in $\alpha$. In the latter case, the position of this crease is always between $c_{i-1}$ and $c_{i+k+1}$, so it is independent of the $MV$  assignment. As shown in the proof of~\autoref{lem:crimpgrow}, the lengths of the intervals in the resulted crease pattern depend only on the position of the creases in $C$ and therefore are also independent of the $MV$ assignment. 
\end{proof}

\noindent
In the case of a crimpable sequence $\alpha$ of odd length, we say that the crease obtained from $\alpha$ through a crimp operation 
{\em survives} the crimp, and all other creases from $\alpha$ {\em disappear}. The observation below follows immediately from the fact that a crimp operation is composed of a series of monocrimps.

\begin{observation}
\label{obs:b}
If a crimpable sequence $\alpha$ has an odd number of creases, then a crimp operation on $\alpha$ yields a surviving crease with the same $MV$ assignment as the majority of the creases in $\alpha$. 
\end{observation}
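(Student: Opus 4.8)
The plan is to track how the counts of $M$ and $V$ creases evolve under the monocrimps that compose the crimp, and to argue that their difference is invariant. First I would fix notation: suppose $\alpha$ contains $p$ creases assigned $M$ and $q$ creases assigned $V$, so that $p + q = k+1$ is odd. By \autoref{thm:crimpforce}, the difference in the number of $M$ and $V$ assignments is exactly one, i.e. $|p - q| = 1$; without loss of generality I would assume $p - q = 1$, so that $M$ is the strict majority assignment among the creases of $\alpha$ (the case $q - p = 1$ being symmetric).

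The key observation is that each monocrimp acts on a pair of consecutive creases of \emph{opposite} $MV$ parity, as required by the very definition of a monocrimp, and causes both creases in that pair to disappear. Hence every monocrimp removes exactly one $M$ crease and exactly one $V$ crease, decreasing both $p$ and $q$ by one while leaving the difference $p - q$ unchanged. Thus $p - q = 1$ is an invariant maintained throughout the entire crimp operation.

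To finish, I would invoke \autoref{lem:crimp}: since $\alpha$ has odd length, the crimp reduces $\alpha$ to a single surviving crease $c \in \alpha$, which retains its original $MV$ assignment. The residual counts $p'$ and $q'$ of the remaining creases then satisfy $p' + q' = 1$ together with the invariant $p' - q' = 1$, forcing $p' = 1$ and $q' = 0$. Therefore the survivor $c$ is assigned $M$, matching the majority assignment of $\alpha$, as claimed.

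I do not expect a serious obstacle here, as the statement falls out almost immediately from the invariance of the count difference combined with \autoref{thm:crimpforce} and \autoref{lem:crimp}. The only points requiring care are to confirm that each monocrimp genuinely eliminates one crease of each parity---which is precisely the opposite-parity requirement built into the monocrimp definition---and to note that the surviving crease carries a well-defined assignment, guaranteed by the fact that $c \in \alpha$ in \autoref{lem:crimp}.
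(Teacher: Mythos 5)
Your proposal is correct and matches the paper's own (very brief) justification, which simply notes that the observation ``follows immediately from the fact that a crimp operation is composed of a series of monocrimps'': since each monocrimp removes one $M$ and one $V$ crease, the difference in counts is invariant and the lone survivor must carry the majority assignment. You have merely spelled out the same argument in more detail, correctly invoking \autoref{thm:crimpforce} and \autoref{lem:crimp} where needed.
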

The crimp operation is depicted in~\autoref{fig:crimpseq}a for $k=6$, and~\autoref{fig:crimpseq}b for $k=5$. 

\subsection{End Sequences}
\label{sec:endseq}
Our algorithm will repeatedly perform crimp operations, until no crimpable sequences are left in the crease pattern. We call the resulting
crease pattern an \emph{end sequence} and the creases \emph{end creases}.
%
%
\begin{figure}[htp]
    \centering
   \includegraphics[width=.8\textwidth]{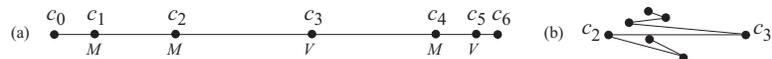}
    \caption{ (a) An end sequence, (b) folded flat using end folds.}
    \label{fig:endsequence}
\end{figure}
%
Let $\lambda$ be the sequence of distances between consecutive creases in an end sequence $E$, in order from left to right.  There can be no local minima in $\lambda$ other than at the ends, because a local minimum would indicate the presence of a crimpable sequence. Therefore, the sequence $\lambda$ starts with zero or more values that are monotonically increasing, and this is followed by zero or more values that are monotonically decreasing. 
%
The following lemma implies that no $MV$ assignments can be forced on
the creases of an end sequence, which also implies that the forcing set of an end sequence must include all the end creases. 


\begin{lemma}
Any $MV$ assignment to the creases of an end sequence is foldable. 
\label{lem:endfold}
\end{lemma}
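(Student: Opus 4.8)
The plan is to induct on the number of creases, folding the end sequence flat one crease at a time by \emph{end folds}, and to argue that no end fold can cause a self-intersection no matter which $MV$ label it receives. The starting point is the structural fact recalled just before the statement: since an end sequence contains no crimpable sequence, its interval-length sequence $\lambda$ has no interior local minimum, so it is unimodal (weakly increasing, then weakly decreasing), and the two shortest intervals are the two boundary intervals $[c_0,c_1]$ and $[c_{n-1},c_n]$. In particular, whichever boundary interval I process is always no longer than its interior neighbor, a fact that will be preserved at every step.

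For the inductive step, I would take the shorter of the two boundary intervals, say $[c_0,c_1]$ with $\ell_0 = c_1-c_0 \le \ell_1 = c_2-c_1$ (the other case is symmetric), and fold the paper at $c_1$ according to $\mu(c_1)$. Because $\ell_0 \le \ell_1$, the image of $[c_0,c_1]$ lands inside $[c_1,c_2]$ and reaches no farther than $c_2$; it is therefore unobstructed, so either label $M$ or $V$ at $c_1$ yields a valid local fold --- this is precisely why nothing is forced on an end crease. Removing $\ell_0$ from the front of a unimodal sequence leaves a unimodal sequence, so the reduced pattern $(c_1,\dots,c_n)$ is again an end sequence, now with one fewer interval, to which the induction hypothesis applies. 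The base cases (zero or one crease) are immediate, since a lone crease folds flat either way.

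The crux of the argument, and the step I expect to be the main obstacle, is showing that assembling these folds never forces two pieces of paper to cross. I would handle this by strengthening the induction hypothesis to assert that the flat folding can be produced \emph{within the footprint of a longest (peak) interval} of the end sequence. Then the folded flap $[c_0,c_1]$ is a stack contained in $[c_1,c_2]$ and attached to the rest of the paper only at the crease $c_1$, which is now a boundary end of the reduced pattern. When the reduced end sequence is folded by the induction hypothesis, this flap travels rigidly with the first interval $[c_1,c_2]$; since it lies within that interval's footprint and is anchored at its endpoint, its layers can be kept immediately adjacent to the layer of $[c_1,c_2]$ throughout, so they are reflected together at every subsequent fold and never cross any other paper. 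Because the peak interval of the reduced sequence coincides with a longest interval of the original, the whole folding stays inside the peak footprint, closing the induction.

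The details I would need to be careful about are the degenerate equal-length configurations (plateaus of equidistant creases at the top of the unimodal profile, which are permitted precisely because their bounding intervals are not strictly longer and so do not form a crimpable sequence): there the image of a fold can land exactly on the next crease, and I must check that folding at that crease does not implicitly re-crease paper elsewhere or create an inconsistent stacking. I expect these boundary cases to be resolvable by observing that coincident fold lines merely stack creases at a common point, which is allowed in a flat folding as long as the induced layer order remains consistent --- and the ``immediately adjacent flap'' bookkeeping above supplies exactly such a consistent order.
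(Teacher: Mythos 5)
Your proof is correct and takes essentially the same approach as the paper: the paper's own argument likewise folds the end sequence flat by a series of end folds, using the unimodality of the interval-length sequence to fold inward from both ends until only a longest interval remains, each end fold being valid under either $MV$ label because the folded boundary interval fits inside its (no shorter) neighbor. (One harmless slip: in a unimodal length sequence the minimum is attained at a boundary interval, but the \emph{two} shortest intervals need not both be boundary intervals --- e.g.\ lengths $1,2,3,2.5$ --- though the only fact you actually use, that each boundary interval is no longer than its interior neighbor, does follow from unimodality.)
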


 \begin{proof}
We show how to fold an end sequence $E$ with an arbitrary $MV$ assignment 
using a series of {\it end folds}. An end fold consists of folding the first or last interval of the sequence under or over, depending 
on the $MV$ assignment of the crease. To ensure that the folded interval doesn't overlap other creases, an end fold is performed 
only when the adjacent interval is of the same length or longer. 
Note that each end fold can be performed regardless of its $MV$ assignment. 

Let $\ell = [c_j,c_{j+1}]$ be the longest interval in $E$; in case of ties pick the leftmost interval.
Because the intervals are monotonically increasing to the left of $\ell$  and  monotonically decreasing to the right of $\ell$,
we can repeatedly fold the leftmost interval under (for a mountain fold) or over (for a valley fold)
until $\ell$ becomes the leftmost interval. Similarly, because the intervals are monotonically increasing to the right 
of $\ell$, we can  do the same on the right end of $E$
until only $\ell$ remains, thus producing a flat folding. See
\autoref{fig:endsequence} for an example where $\ell = [c_2,c_3]$. 
\end{proof}

\noindent
It follows from~\autoref{lem:endfold} that the forcing set of an end sequence must include all the end creases. 

\section{Constructing a Minimum Forcing Set}
\label{sec:algorithm}
We describe a simple algorithm that finds a minimum forcing set $F$ for any given foldable $MV$ pattern $(C, \mu)$.
Although the algorithm is straightforward, the proof that it finds a minimum forcing set is quite intricate. 
The algorithm begins by 
constructing a forest of trees whose nodes correspond to crimpable sequences encountered during the
folding process and whose edges capture dependencies between them (\autoref{sec:forest}). 
Then it traverses the trees in a top-down manner to 
decide which creases corresponding to each node should be added to the 
forcing set  (\autoref{sec:force}).

\subsection{Crimp Forest Construction}
\label{sec:forest}
The {\sc CrimpForest} algorithm detailed in~\autoref{alg:crimpForest} computes a {\it crimp forest} $W$ corresponding to the foldable input $MV$  pattern $(C,\mu)$. $W$ is a collection of rooted trees, where each node corresponds to a crimpable sequence that is encountered during the folding process.
For simplicity, we will use the terms ``node'' and ``crimpable sequence'' interchangeably to refer to a node in a crimp tree.
 For a crease pattern $C$, we can easily identify its set $S$ of crimpable sequences by scanning its interval lengths 
from left to right.  
In $W$, a node $c$ is a child of another node $p$ if 
the crimp operation on $c$ has a surviving crease, and the next crimp sequence involving the surviving crease is $p$.

\begin{algorithm}
\centerline{{\sc CrimpForest}($C, \mu$)}
{\hrule width 0.92\linewidth}\vspace{0.8em}
Initialize $W \leftarrow \emptyset$.\\
\While{$C$ has a crimpable sequence}{
      Let $s$ be the leftmost crimpable sequence in $C$. \\
      Create a node $v$ corresponding to $s$, and add $v$ to $W$. \\
      Make $v$ the parent of each root node in $W$ whose crimp sequence has a surviving crease that is in $s$. \\ 
      Apply the crimp operation to $s$. \\
      Update $C$ to be the resulting crease pattern. 
}
\vspace{2mm}\noindent{\hrule width 0.92\linewidth} \vspace{1mm} 
\caption{\sc CrimpForest algorithm.}
\label{alg:crimpForest}
\end{algorithm}

See~\autoref{fig:forestex} for an example of a crease pattern and corresponding crimp forest. Initially the forest $W$ is empty.
The leftmost crimpable sequence in the crease pattern from~\autoref{fig:forestex} (bottom right) is $\alpha_1 = (c_1, c_2, c_3)$.
 The {\sc CrimpForest} algorithm processes $\alpha_1$ by adding a corresponding node to the forest $W$ and performing the crimp operation on $\alpha_1$, resulting in the smaller crease pattern shown on the right, second from bottom. The crease $c_1$ survives the crimp operation on $\alpha_1$. The leftmost crimpable sequence is now $\alpha_2 =  (c_1, c_4, c_5)$.
Because $\alpha_2$ includes the survivor $c_1$ of $\alpha_1$, the algorithm adds $\alpha_2$ to $W$ as a parent of $\alpha_1$.
After crimping $\alpha_2$, the leftmost crimpable sequence is $\alpha_3 = (c_6, c_7, c_8)$, which is crimped (with $c_8$ surviving) and a corresponding node is added to $W$.
The new crease pattern includes one crimpable sequence $\alpha_4 = (c_5, c_8, c_9)$. Because $c_5$ is the survivor of $\alpha_2$ and $c_8$ is the survivor of $\alpha_3$, the node corresponding to $\alpha_4$ becomes parent of both $\alpha_2$ and $\alpha_3$ in $W$. 
The underlined creases from~\autoref{fig:forestex} will be used in the description of the minimum forcing set algorithm from~\autoref{sec:force}.

%
\begin{figure}[htp]
    \centering
   \includegraphics[width=\textwidth]{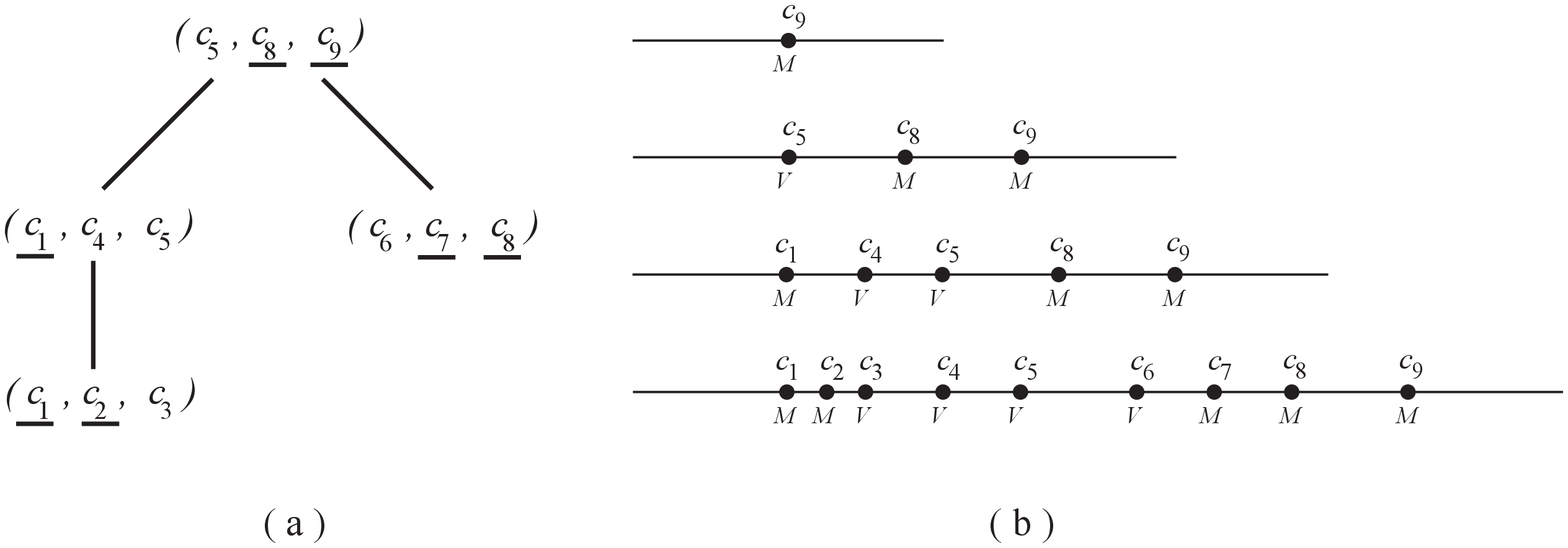}
    \caption{ The (a) crimp forest corresponding to the (b) crease pattern $C$ on the lower right having
interval lengths 4, 1, 1, 2, 2, 3, 2, 2, 3, 4.
Each new crease pattern resulting after processing the leftmost crimpable sequence
is shown above the previous pattern. 
The underlined creases in (a) are referred to in the algorithm from~\autoref{sec:force}.
}
    \label{fig:forestex}
\end{figure}


The observation below follows immediately from the fact that creases at the root nodes are not involved in any further crimp operations. 
\begin{observation}
\label{obs:a}
If a crimpable sequence $\alpha$ has an even number of creases (and so an equal number of $M$ and $V$ assignments), then $\alpha$ is the root of a tree in $W$. If the root $r$ of a tree in $W$ has an odd number of creases (and so unequal number of $M$ and $V$ assignments), then the crimp operation on $r$ yields a surviving crease that is an end crease (in the end sequence). 
\end{observation}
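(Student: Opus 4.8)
The plan is to derive both statements directly from \autoref{lem:crimp} together with the parent-assignment rule used in \autoref{alg:crimpForest}. The single fact that drives everything is a characterization of when a node acquires a parent: by the construction in {\sc CrimpForest}, a node becomes a child of another node exactly when its crimp operation produces a surviving crease and that survivor subsequently appears in some later (processed) crimpable sequence. Equivalently, a node is a root if and only if it has no survivor, or its survivor is never consumed by any subsequent crimp. I would first make this equivalence precise as a contrapositive of the parent rule, and then treat the even and odd cases separately.

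For the first statement, suppose $\alpha$ has an even number of creases. By \autoref{lem:crimp}, the crimp operation on $\alpha$ reduces it to the null sequence, so $\alpha$ produces no surviving crease. Since the only way for a node to be made the child of another in {\sc CrimpForest} is for its survivor to lie in the processed leftmost crimpable sequence, a node with no survivor can never be assigned a parent. Hence $\alpha$ remains a root of its tree in $W$. The parenthetical count of equal $M$ and $V$ assignments is immediate from \autoref{thm:crimpforce}.

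For the second statement, let $r$ be a root with an odd number of creases. By \autoref{lem:crimp}, the crimp operation on $r$ yields exactly one surviving crease $c$, whose position is independent of the $MV$ assignment. Because $r$ is a root it has no parent, so by the parent rule $c$ can never belong to a crimpable sequence that is processed after $r$. The key observation is then that a crease is eliminated only by being one of the creases of a crimpable sequence on which a crimp is performed; thus a crease that never joins a later processed crimpable sequence is never removed. Consequently $c$ survives every iteration of {\sc CrimpForest} and is still present when the loop terminates. By definition the crease pattern remaining at termination is precisely the end sequence, so $c$ is an end crease, as claimed. Again the unequal $M$/$V$ count follows from \autoref{thm:crimpforce}.

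The one step requiring care is the persistence argument in the second statement: I must rule out the possibility that $c$ disappears through some later crimp even though $r$ is a root. This is exactly where the parent rule is used in contrapositive form — if $c$ were ever part of a later crimped sequence $s$, then by construction $r$ would be made a child of the node corresponding to $s$, contradicting that $r$ is a root. Pinning down this contrapositive, and confirming that the terminal crease pattern coincides with the end sequence defined in \autoref{sec:endseq}, are the only nonroutine points; the remainder is a direct appeal to \autoref{lem:crimp}.
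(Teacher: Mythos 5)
Your proposal is correct and matches the paper's reasoning: the paper dispenses with this observation in one sentence ("creases at the root nodes are not involved in any further crimp operations"), which is exactly the contrapositive of the parent-assignment rule you isolate and then apply, together with \autoref{lem:crimp}, to the even and odd cases. Your write-up simply makes the same argument explicit.
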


\noindent
We will need the following lemma in our proof of correctness in~\autoref{sec:correct}. 

\begin{lemma}
Given a crease pattern $C$ and two foldable $MV$ assignments $\mu_1$ and $\mu_2$, let $W_1$ and $W_2$ be the crimp forests corresponding to $(C,\mu_1)$ and $(C,\mu_2)$, respectively.  
Then the following properties hold:
\begin{enumerate}
\item [\emph{(1)}] $W_1$ and $W_2$ are structurally identical. 
\item [\emph{(2)}] Corresponding nodes in $W_1$ and $W_2$ have crimpable sequences of the same size and the same interval distances between adjacent creases.
\item [\emph{(3)}] Creases involved for the first time in a crimpable sequence at a node in $W_1$ have the same 
position in the crimpable sequence at the corresponding node in $W_2$. 
\end{enumerate}
\label{lem:forest}  
\end{lemma}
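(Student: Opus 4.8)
The plan is to prove all three properties simultaneously by showing that the \textsc{CrimpForest} algorithm (\autoref{alg:crimpForest}) behaves \emph{identically} on $(C,\mu_1)$ and on $(C,\mu_2)$, in the sense that the two executions pass through exactly the same sequence of crease patterns. The whole point is that, although the runs use different $MV$ assignments, every quantity that drives the algorithm is \emph{geometric}: which subsequences are crimpable, which one is leftmost, and what pattern results from a crimp. The engine for this is \autoref{lem:crimp}, which tells us that after a crimp on a crimpable sequence the position of the surviving crease (if any) and all resulting interval distances are functions of the crease positions alone, independent of the $MV$ assignment.

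First I would set up an induction on the iteration count $t$ of the while-loop, with the invariant that at the start of iteration $t$ the current crease pattern $C^{(t)}$ is the same in both runs, meaning the same crease positions and hence the same sequence of interval lengths. The base case $t=1$ is immediate, since both runs start from $C$. For the inductive step, assume $C^{(t)}$ is common to both runs. Because a crimpable sequence is defined purely by interval lengths (equidistant creases flanked by strictly longer intervals), both runs identify the same set of crimpable sequences and therefore the same leftmost one, call it $s$; both create the same node $v$ at this iteration, with a crimpable sequence of the same size and the same internal interval distances. Both runs then apply a crimp to $s$, and by \autoref{lem:crimp} the resulting crease positions and interval distances are identical in the two runs, so $C^{(t+1)}$ is again common, closing the induction. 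The statement about node sizes and interval distances is exactly property~(2).

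Properties~(1) and~(3) then follow from the same invariant. For~(1), recall that in \autoref{alg:crimpForest} a node becomes the parent of a previously created root node precisely when that root's surviving crease lies in the current crimpable sequence $s$. Since \autoref{lem:crimp} fixes each survivor's position independently of the assignment, and both runs process the same sequences at the same positions, a given survivor lies in $s$ in one run if and only if it does in the other; hence the parent--child relation, and therefore the entire forest structure, is the same in $W_1$ and $W_2$. For~(3), observe that a crease appearing in the crimpable sequence at a node is either a survivor of a child node's crimp or is being involved for the first time; the survivor positions are determined geometrically (again \autoref{lem:crimp}) and coincide across the two runs, so the first-time creases occupy the same positions, and hence the same index within the sequence, in both $W_1$ and $W_2$.

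The step I expect to require the most care is justifying that the induction is well founded, i.e. that every intermediate pattern $C^{(t)}$ is foldable, so that \autoref{lem:crimp} actually applies and each crimp fully reduces its sequence to a null or single crease. I would argue that crimping preserves foldability: by \autoref{thm:crimpforce} a crimpable sequence in a foldable pattern contains an adjacent opposite-parity pair, a monocrimp on such a pair realizes two of the folds of an actual flat folding, and the reduced pattern inherits the restriction of that flat folding; iterating these monocrimps yields a flat folding of $C^{(t+1)}$. A secondary subtlety is confirming that the split of each crimpable sequence into ``survivors'' and ``first-time'' creases is itself assignment-independent, but this again reduces to the fact, established inductively, that the survivor positions produced at earlier iterations coincide in the two runs.
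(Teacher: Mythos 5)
Your proposal is correct and follows essentially the same route as the paper: induction on the iterations of the \textsc{CrimpForest} loop, using the facts that crimpable-sequence identification and the post-crimp interval lengths depend only on crease positions, and that \autoref{lem:crimp} pins down each survivor's position independently of the $MV$ assignment, so that node creation, edge insertion, and the placement of first-time creases coincide across the two runs. Your extra care about foldability of the intermediate patterns (so that \autoref{lem:crimp} applies at every stage) is a point the paper leaves implicit, but it does not change the argument.
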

\begin{proof}
By definition, a consecutive sequence $c_i, c_{i+1}, \dots, c_{i+k}$ of creases is crimpable if the creases are equally spaced at some distance $d$ and the intervals adjacent to the right of $c_{i+k}$ and to the left of $c_i$ are strictly longer than $d$.  The definition is based solely on interval lengths, and so the identification of crimpable sequences is independent of the $MV$ assignment. 

The sequence of interval lengths resulting after a crimp operation is also independent of the $MV$ assignment. This is because a crimp operation consists of a series of monocrimps, and after each monocrimp two adjacent creases $c_i$ and $c_{i+1}$ disappear and are replaced with an interval of length $c_{i+2}-2c_{i+1} + 2c_i -c_{i-1}$, as derived in equation~(\ref{eq:crimplen}) in~\autoref{sec:crimps}.  It does not depend on the $MV$ assignment.
We note, however, that for an odd length crimpable sequence, the particular crease that survives may be different. For example,
in crimpable sequence $(c_1,c_2,c_3)$ in~\autoref{fig:forestex}a with assignment $(M,M,V)$, crease $c_1$ survives; but with assignment $(V,M,M)$, crease $c_3$ survives instead.

Consider now two runs of the {\sc CrimpForest} algorithm (\autoref{alg:crimpForest}), one using input $(C,\mu_1)$ and the other using input $(C, \mu_2)$. In the first iteration of the while loop, both runs identify the same set of crimpable sequences in $C$, because identification of crimpable sequences is independent of $MV$ assignment. In this case nodes with identical crimpable sequences of creases are added to $W_1$ and $W_2$.
(We abuse our notation slightly here by letting $W_1$ be the forest in the current iteration of the run using input $(C,\mu_1)$, and similarly for $W_2$). Therefore, properties (1), (2) and (3) hold for $W_1$ and $W_2$ after the first iteration of the two runs.  
In addition, after performing the crimp operation on each identified crimpable sequence, the updated crease pattern $C$ in both runs consists of the same sequence of interval lengths, because the interval sequence resulting after a crimp operation is independent of the $MV$ assignment. As noted earlier, the particular crease surviving a crimp operation in one run may not be the same as the surviving crease in the other run. By~\autoref{lem:crimp}, for each survivor in one run, there is a corresponding survivor in the other and they have the same 
position in the resulting crease sequence. This will be important later when adding edges to the forests. 

Assume inductively that after the $i$th iteration,  properties (1), (2) and (3) hold for $W_1$ and $W_2$. 
In addition, assume inductively that the resulting crease pattern in each run consists of the same sequence of interval lengths, and includes all creases not yet involved in any crimpable sequences in the same 
position (in addition to those creases surviving the crimps).
 
Now consider the $(i+1)$th iteration. Because the crease patterns in both runs after iteration $i$ have the same sequence of interval lengths, the set of crimpable sequences identified in iteration $i+1$ will also be the same. 
Thus each node created in iteration $i+1$ of the run under $\mu_1$ has a corresponding node created in the run under $\mu_2$ with a crimpable sequence of the same size and same interval lengths. Thus property (2) holds. 
By the inductive hypothesis, creases involved for the first time in the crimpable sequences identified in iteration $i+1$ have the same 
position in the crease patterns resulted from iteration $i$. In particular, they have the same 
position in the crimpable sequences associated with corresponding nodes in $W_1$ and $W_2$, thus establishing property (3).

If a node $n_1$ created in iteration $i+1$ of the run under $\mu_1$ contains a crease $c_j$  that is a survivor of the crimpable sequence at a root node $r_1$ of $W_1$, then by the inductive hypothesis, there is a corresponding root node $r_2$ in $W_2$ whose crimpable sequence has a corresponding survivor $c_\ell$. 
In the $(i+1)$th iteration of the second run, a node $n_2$ (corresponding to $n_1$)
is created that contains $c_\ell$ in its crease sequence.
Therefore,  edge $(n_1,r_1)$ will be added to $W_1$ and edge
$(n_2,r_2)$ will be added to $W_2$, so the structure of the two forests remains the same. Thus property (1) holds. 
Because the identified crimpable sequences are in the same locations, and have the same size and same interval
lengths in both runs, after applying the crimp operations in iteration $i+1$, the resulting
crease sequences have the same interval distances in both runs. By~\autoref{lem:crimp}, survivors of 
these crimp operations will have the same 
position in the two crease sequences. This along with the inductive hypothesis implies that creases 
not yet involved in a crimp operation, which intersperse with survivors of previous crimp operations, 
have the same 
position in the crease patterns resulted from the two runs. 
This establishes the inductive hypothesis for the $(i+1)$th iteration. 
\end{proof}

\begin{lemma}
The time complexity of the {\sc CrimpForest} algorithm (\autoref{alg:crimpForest}) is $O(n)$, where $n$ is the number of input creases. 
\label{lem:crimpforest-time}
\end{lemma}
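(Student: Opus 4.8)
The plan is to show that, with a suitable data structure, the total work performed by {\sc CrimpForest} is bounded by a constant times the number of creases and creations that ever occur, and that this total is $O(n)$. I would maintain the evolving crease pattern as a doubly linked list of intervals, each storing its current length, and drive the whole computation by a single left-to-right sweep supported by a stack. The entire analysis is a charging argument: every unit of work is charged either to a crease that permanently disappears, to a forest edge, or to an interval created by a fusion, and I will argue that each of these three quantities is $O(n)$.

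First I would bound the number of nodes, the number of edges, and the total crimping work. Each node corresponds to a crimp on a crimpable sequence of $m \ge 2$ creases, which consumes all $m$ of them and, by \autoref{lem:crimp}, leaves at most one survivor; thus each node decreases the number of live creases by at least one, and since there are only $n-1$ creases initially, the number of nodes is at most $n-1 = O(n)$. Each node contributes at most one parent edge, because its single survivor is absorbed by at most one subsequent crimpable sequence, so the forest has $O(n)$ edges. By \autoref{lem:crimp} the position of the survivor and all resulting interval lengths depend only on the crease positions, so a crimp can be executed in time proportional to the size $m$ of its sequence (traversing and removing its creases and recording the new interval lengths). Summing $m$ over all nodes simply counts each (crease, node) incidence once; such an incidence is either the first appearance of an input crease in any crimpable sequence (at most $n-1$ of these in total) or the reappearance of a crease that survived the previous crimp it belonged to, which corresponds to a unique forest edge (again at most $n-1$ in total). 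Hence the total crimping work is $O(n)$.

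The remaining cost is that of locating the leftmost crimpable sequence in each iteration, and here the crucial fact is the growth property of \autoref{lem:crimpgrow}: the interval produced by a crimp is at least as long as each of the two intervals that flanked the crimped sequence, so it is at least as long as both of its former neighbours. I would implement the search as a stack sweep that scans interval lengths from left to right, coalescing maximal runs of equal length into single blocks and keeping the stack strictly decreasing from bottom to top. Under this invariant the block $T$ on top of the stack is automatically flanked on the left by a strictly longer block, so $T$ is recognized as the leftmost crimpable sequence exactly when the next block read is also strictly longer than $T$; every block deeper in the stack is larger than its neighbours and so cannot be a valley. When $T$ is crimped, it fuses with its two flanking blocks into a single block $F$, and the growth property guarantees $F$ is at least as long as both flanks. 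Consequently $F$ can only ``rise'' relative to the block now exposed beneath it: if $F$ exceeds that block, the exposed block has become a valley and is crimped in turn, and this cascade continues, popping blocks, until the strictly-decreasing invariant is restored, after which the sweep resumes reading input.

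The main obstacle is precisely the amortized bookkeeping of this sweep, namely ensuring that no interval is examined more than a constant number of times despite the cascading fusions. The charging argument settles it. Every block is pushed onto the stack exactly once, either when it is first read from the input or when it is first created by a fusion, and it is popped at most once thereafter; re-examination after a crimp is confined to the blocks that are popped during the cascade and never reaches previously settled blocks further left. The only blocks created beyond the initial $n$ intervals are the fused blocks, and by the node count above there are $O(n)$ of them. Hence the total number of push and pop operations, and therefore the total scanning work, is $O(n)$. Combining the $O(n)$ bounds on the number of nodes, the number of edges, the crimping work, and the stack operations yields an overall running time of $O(n)$.
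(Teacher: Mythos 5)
Your proof is correct, and at its core it is the same argument as the paper's: a single left-to-right sweep that, after each crimp, backs up only over the locally affected intervals, with an amortized charge showing each interval is touched $O(1)$ times. The packaging differs in a way that is worth noting. The paper implements the sweep as a three-step scan (find local maximum, find local minimum, crimp) with an explicit case analysis of how the interval profile can change after a crimp and where to resume scanning; you instead maintain a monotone (strictly decreasing) stack of equal-length blocks and let the standard push/pop charging do the accounting, which is cleaner and makes the ``never re-examine settled blocks'' claim essentially automatic. You also make explicit something the paper only asserts inside the proof of \autoref{thm:main}: that the total size of all crimpable sequences is $O(n)$, via the charge of each (crease, node) incidence to either a first appearance or a forest edge. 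Two small imprecisions to fix, neither fatal: a crimp on an even-length sequence fuses with only the single adjacent \emph{interval} of each flanking block, not the whole flanking block (the remainder of that block may then coalesce with, or become a valley beneath, the fused interval); and in the odd-length case there is no fusion at all --- the flanking intervals are unchanged and a survivor remains --- yet removing the block can still bring two longer blocks into adjacency and trigger a cascade, so your pop-and-recheck loop must run in that case too. Your stack invariant handles both once stated carefully.
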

\begin{proof}

We show that the {\sc CrimpForest} algorithm has a simple linear-time implementation that scans the crease pattern $C$ from left to right, repeatedly identifying the leftmost crimpable sequence, $\alpha$, and then processing and crimping it as described in~\autoref{alg:crimpForest}. The only complication is that crimping $\alpha$ can result in a new crimpable sequence forming on its left, but we show this can be detected and handled efficiently. 

We describe the details in three steps which we repeat until we
  reach the end of $C$. It is helpful here to recall that a crimpable
  sequence is a local minimum in the interval lengths of $C$. 
  \begin{itemize}
    \item In step 1, we scan the intervals in $C$ from left to right
      searching for the first local maximum. Specifically, we pass
      over zero or more intervals of monotonically increasing length
      until reaching the first interval, $I_0$, having a smaller
      interval immediately to its right.
    \item Then in step 2, we resume scanning to the right for the
      first local minimum, passing over one or more intervals of
      monotonically decreasing length until reaching the first
      interval having a larger interval immediately to its right. This
      interval marks the end of the leftmost crimpable sequence,
      $\alpha$, in $C$.  
    \item In step 3, in time linear in the number of creases in
      $\alpha$, we can identify $\alpha$'s creases and the intervals
      $I_\ell$ and $I_r$ located to its left and right, and then process
      $\alpha$ as described in the loop iteration
      of~\autoref{alg:crimpForest}.
  \end{itemize}

If $\alpha$ has an odd number of creases, then $I_\ell$'s length is
  not affected by the crimp operation on $\alpha$, so we go back to
  step 2 and resume scanning for the first local minimum starting from
  $I_\ell$. If, on the other hand, $\alpha$ has an even number of
  creases, then $I_\ell$ and $I_r$ merge into a new, longer interval $I$
  after the crimp operation, which may have produced a new crimpable
  sequence (i.e., local minimum) immediately to the left of $I$. 
  There are three cases to consider: (a) If 
  $I_0\subset I$ (i.e.,  $I_0 = I_\ell$ merged with $I_r$ into $I$),  
 or if $I_0$ is
  immediately to the left of $I$ and $I$ is longer than $I_0$, then no
  new local minimum was produced, but the first local maximum may have
  changed; in this case, we go back to step 1 and resume scanning from
  $I$ for the first local maximum (and reset $I_0$ if the maximum
  changed); (b) If $I$ is shorter than the interval $J$ immediately to
  its left, then again no new local minimum was produced and we go
  back to step 2, looking for the first local minimum starting from
  $I$; (c) If however $I$ is longer than $J$, then $J$ marks the end
  of a newly formed local minimum (because interval lengths from $I_0$
  to $J$ have decreasing lengths and $I$ is longer than $J$); in this
  case, we let $\alpha$ be the newly formed crimpable sequence, and we
  go back to step 3 and process it similarly. 

Using this implementation, each interval is involved in only a constant number of operations that include comparisons and crimps. This shows that the crimp forest construction can be done in time linear in the number of creases in $C$.
\end{proof}

\subsection{Forcing Set Algorithm}
\label{sec:force}
Given a foldable $MV$ pattern $(C,\mu)$, the {\sc ForcingSet} algorithm described in~\autoref{alg:forceSet} computes a minimum forcing set $F$ of $C$. The algorithm starts by constructing the crimp forest discussed in the previous section (\autoref{sec:forest}), then processes each tree in the forest in a top down manner. 
%
As an example, the forcing set for the crimp tree depicted in~\autoref{fig:forestex} consists of all underlined creases in the tree. Initially the forcing set $F$ contains the only end crease $c_9$. We begin a preorder traversal of the tree rooted at node $\alpha_4 = (c_5, c_8, c_9)$.  Because the surviving crease $c_9$ of $\alpha_4$ is already in $F$, we add to $F$ the crease $c_8$, because it has the majority $MV$ assignment (same as $c_9$). Moving down to  $\alpha_2 = (c_1, c_4, c_5)$, the surviving crease $c_5$ of $\alpha_2$ is not in $F$, so in this case we add $c_1$ to $F$ because it has the minority $MV$ assignment among all creases in $\alpha_2$. The other nodes are handled similarly.  This procedure takes $O(n)$ time for a crease pattern with $n$ creases. 

\begin{algorithm}
\centerline{{\sc ForcingSet}($C, \mu$)}
{\hrule width 0.94\linewidth}\vspace{0.8em}
Initialize $W$ to the output generated by {\sc CrimpForest}($C, \mu$). \\
Initialize $F$ to the set of end creases that remain after running {\sc CrimpForest}$(C,\mu)$ \\ 
\For {\textnormal {\bf {each}} tree $T \in W$ } {
   \For {\textnormal {\bf {each}} node $v$ in a preorder traversal of $T$  } {
       \uIf {$v$'s crimp sequence has even length} {
           Add to $F$ all creases from $v$'s crimp sequence having $M$ assignment. \\
       }
       \uElseIf{the surviving crease from $v$'s crimp sequence is already in $F$} {
                 Add to $F$ all creases from $v$'s crimp sequence having the majority $MV$ assignment.\\
        }
        \Else{
                 Add to $F$ all creases from $v$'s crimp sequence having the minority $MV$ assignment.\\ 
         }   
   }
}
\vspace{2mm}\noindent{\hrule width 0.94\linewidth} \vspace{1mm} 
\caption{Forcing set algorithm.}
\label{alg:forceSet}
\end{algorithm}

\noindent
\autoref{lem:force} is a key ingredient in our proof of correctness (presented in~\autoref{sec:correct}).
\begin{lemma}
Let $(C, \mu_1)$ be a foldable $MV$ pattern, and let $F$ be the forcing set generated by~\autoref{alg:forceSet} with input $(C,\mu_1)$. Let $(C, \mu_2)$ be a foldable pattern such that $\mu_2$ agrees with $\mu_1$ on the forcing set $F$.  Let 
$T_1$ and $T_2$ be two structurally equivalent trees generated by the forcing set algorithm with inputs $(C,\mu_1)$ and $(C,\mu_2)$, respectively.  
If a crease $c$ in a crimpable sequence $\alpha_1 \in T_1$ is in $F$, then a crease (not necessarily $c$) with the same $MV$ assignment occurs in the corresponding crimpable sequence  $\alpha_2 \in T_2$, in the same 
position as in $\alpha_1$. 
\label{lem:force}
\end{lemma}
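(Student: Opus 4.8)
The plan is to prove the statement by induction on the crimp tree $T_1$, processing nodes from the leaves up to the root, so that when I handle a node $v$ the statement is already available for every descendant of $v$. Fix a node $v$ with crimpable sequence $\alpha_1$ in $T_1$ and corresponding $\alpha_2$ in $T_2$; by \autoref{lem:forest} these have equal size and, position by position, the crease there is a \emph{first-time} crease in both runs or the survivor of the same child $u$ in both runs. Take a crease $c\in\alpha_1\cap F$ at position $j$. If that position holds a first-time crease, then it is literally the same original crease in both runs, so $\mu_2(c)=\mu_1(c)$ follows immediately from the hypothesis that $\mu_2$ agrees with $\mu_1$ on $F$, and this position is settled.

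The substance is the case where position $j$ holds the survivor of a child $u$. By \autoref{obs:b} the assignment there equals the majority value of $u$ in each run, so it suffices to show that $u$ has the same majority value under $\mu_1$ and $\mu_2$. First I will record an auxiliary fact about the algorithm: if the survivor of a node $u$ belongs to the final set $F$, then it is already in $F$ at the moment \autoref{alg:forceSet} processes $u$, so $u$ takes the ``majority'' branch and all of $u$'s majority creases are placed in $F$. To see this, I follow the survivor $s$ of $u$ up its survivor chain: $s$ is the survivor of a contiguous run of (necessarily odd) ancestors, and at each such node the algorithm adds $s$ only if it is already present. The only node where $s$ appears as a non-surviving member is the top $t$ of the chain, a strict ancestor of $u$ that the preorder visits before $u$ (or else $s$ is a root survivor, hence an end crease already in the initial $F$ by \autoref{obs:a}). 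Thus an $s\in F$ must originate at or above $t$, before $u$ is processed.

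Applying this with the current survivor $c=s\in F$, node $u$ places all its majority creases (there are $(\ell_u+1)/2$ of them, where $\ell_u$ is the odd length of $u$, by \autoref{thm:crimpforce}) into $F$. By the inductive hypothesis applied to $u$, each such majority crease at a position $p$ of $\alpha_1^{u}$ is matched by a crease of the same assignment at position $p$ of $\alpha_2^{u}$; hence under $\mu_2$ the sequence at $u$ carries at least $(\ell_u+1)/2$ creases with the run-$1$ majority value of $u$. Since $\ell_u$ is odd, \autoref{thm:crimpforce} forces this to be the majority value under $\mu_2$ as well, so the survivor at position $j$ of $\alpha_2$ carries the assignment $\mu_1(c)$, exactly as required. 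The even-length and root cases fit the same template: an even node contributes its $M$-creases, each again either a first-time crease handled by agreement or a child survivor handled by the inductive hypothesis, and by \autoref{obs:a} a root survivor is an end crease already in the initial $F$, while \autoref{lem:crimp} guarantees that the relevant survivors sit at the same position in both runs.

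I expect the main obstacle to be the interaction between the \emph{top-down} order in which \autoref{alg:forceSet} commits creases to $F$ and the \emph{bottom-up} induction used here: the branch taken at $u$ depends on whether its survivor lies in $F$ at processing time, whereas the induction reasons from the children upward. The survivor-chain argument of the second paragraph is precisely what reconciles the two, and getting its bookkeeping right—in particular arguing that a survivor in $F$ was forced by a strict ancestor rather than later by a descendant—is the delicate point. The remaining steps are routine once \autoref{lem:forest} and \autoref{lem:crimp} guarantee that sizes, positions, and survivor locations agree across the two runs.
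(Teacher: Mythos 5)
Your proof is correct and follows essentially the same route as the paper's: a bottom-up induction on the crimp tree, splitting each position of $\alpha_1$ into the first-time-crease case (settled by property (3) of \autoref{lem:forest} together with agreement of $\mu_1,\mu_2$ on $F$) and the child-survivor case (settled by the inductive hypothesis at the child plus \autoref{thm:crimpforce} and \autoref{lem:crimp}). Your survivor-chain argument merely makes explicit a step the paper asserts directly from the top-down processing order --- that a survivor lying in $F$ must already be in $F$ when its node is processed --- so this is a welcome elaboration rather than a different approach.
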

\begin{proof}
Our proof is by induction on the height $h$ of the trees $T_1$ and $T_2$. The base case corresponds to $h=1$ (each tree is a leaf node). In this case $\alpha_1$ is the first crimpable sequence involving $c$, and the lemma follows immediately  
from property (3) established by~\autoref{lem:forest}. 

The inductive hypothesis states that the lemma holds for any trees $T_1$ and $T_2$ of height $i$ or less. If $\alpha_1$ is at level $i$ or lower in $T_1$, then the lemma holds by the inductive hypothesis. So consider now the case when $\alpha_1$ is at level $i+1$ and contains crease $c \in F$. If $c$ has not yet been involved in a crimp operation, then the lemma follows immediately from property (3) established by~\autoref{lem:forest}. 
Otherwise, $c$ is a crease surviving a crimp operation performed on some child node $\beta_1$ of $\alpha_1$. Because the forcing set algorithm processes $T_1$ top-down, the crease $c$ is already in $F$ at the time node $\beta_1$ is processed. As a result, all creases from $\beta_1$ with the majority assignment are added to $F$. By the inductive hypothesis, creases with the same $MV$ assignment occur in the corresponding node $\beta_2 \in T_2$, in the same 
positions as in $\beta_1$. Because $(C, \mu_1)$ and $(C, \mu_2)$ are foldable, by~\autoref{thm:crimpforce} all other creases in $\beta_1$ and $\beta_2$ have the opposite assignment. 
This implies that the crease surviving the crimp operation on $\beta_2$ has the same $MV$ assignment as the crease surviving the crimp operation on $\beta_1$, which is $c$. This shows that a crease with the same assignment as $c$ occurs in $\alpha_2$ as well. By~\autoref{lem:crimp}, this crease has the same 
position in $\alpha_2$ as $c$ in $\alpha_1$. 
\end{proof}

\section{Proof of Correctness}
\label{sec:correct}
In this section we prove that the {\sc ForcingSet} algorithm outlined in~\autoref{sec:force} produces a minimum forcing set $F$ of the input $MV$ pattern $(C, \mu)$.  
Call two crimpable sequences $\alpha_1$ and $\alpha_2$  \emph{similar} if they have the same size, the same $MV$ assignment read from left to right, and same interval lengths. 
Call a sequence of crimps $O$ \emph{exhaustive} if the crease pattern produced by $O$  contains no crimpable sequences (so no more crimps could be performed on this crease pattern). Our algorithm will perform an exhaustive sequence of crimps on the crease pattern $C$.  An important issue here is that, if several crimps could be performed on the crease pattern at one point, then we must choose a particular order for these crimps, including any new crimps that might occur as a result of the previous crimps. A natural question here is whether this ordering is important. In other words, could different orderings of crimps result in different (say, larger or smaller) forcing sets?~\autoref{lem:order} below proves that this will not be an issue. 

\begin{restatable}{lemma}{orderlemma}
\label{lem:order}
Let $O_1, O_2$ be two non-identical, exhaustive sequences of crimps performed on a one-dimensional crease pattern $C$. Then every crimp that is in $O_1$ is also in $O_2$.
\end{restatable}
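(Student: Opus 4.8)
The plan is to treat a crease pattern purely as a sequence of interval lengths (absolute positions are irrelevant, since only lengths survive the fusing). By the remarks following equation~(\ref{eq:crimplen}) and by \autoref{lem:crimp}, both the identification of a crimpable sequence and the interval lengths produced by a crimp depend only on these lengths and not on the $MV$ assignment, so I will record the effect of a single crimp as a local rewrite on this sequence. If $\alpha$ is a plateau of equal intervals $d$ with bounding intervals $L$ and $R$ (each strictly larger than $d$), then an even crimp replaces the block $L, d, \dots, d, R$ by the single interval $L + R - d$, while an odd crimp deletes the $d$'s and keeps $L$ and $R$ unchanged with one surviving crease between them. In particular an even crimp \emph{strictly enlarges} the interval occupying the plateau's former location, and an odd crimp \emph{preserves} its two neighbors. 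Throughout I will identify a crimp with the set of creases forming its plateau (original creases together with survivors of earlier odd crimps), so that ``the same crimp'' is well defined across two different crimp sequences.

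Two structural facts drive the argument. \emph{Persistence}: crimping any plateau $\beta$ leaves every \emph{other} plateau $\alpha$ intact, never altering the spacing inside $\alpha$ and only enlarging (never shrinking) the two bounding intervals of $\alpha$. Indeed, if $\beta$ is not adjacent to $\alpha$ the rewrite misses $\alpha$ entirely; if $\beta$ shares a bounding interval $X$ with $\alpha$, then either $\beta$ is odd and preserves $X$, or $\beta$ is even and replaces $X$ by an interval $> X > d_\alpha$. Hence any plateau present at some stage stays crimpable (as the same crimp) until it is itself crimped. \emph{Diamond}: if two distinct plateaus $\gamma$ and $\delta$ are both crimpable in a pattern $K$, then each remains crimpable after the other is crimped (by persistence) and the two orders $\gamma\delta$ and $\delta\gamma$ yield identical patterns. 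When the windows of $\gamma$ and $\delta$ are disjoint this is immediate; when they abut in a shared bounding interval $X$ one checks the four parity combinations using the rewrite rules, e.g.\ two even crimps both produce the merged interval $L_\gamma + X + R_\delta - d_\gamma - d_\delta$ in either order, with the mixed and odd--odd cases handled identically.

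With these in hand I argue by induction on the number of creases of $C$. If $C$ has no crimpable sequence then $O_1 = O_2$ is empty and there is nothing to prove. Otherwise let $\gamma_1$ be the first crimp of $O_1$, which is a crimpable sequence of $C$. By persistence, $\gamma_1$'s plateau survives unchanged along $O_2$ until crimped, and since $O_2$ is exhaustive it must eventually crimp that plateau; thus $\gamma_1 \in O_2$. Using the diamond property I bubble the occurrence of $\gamma_1$ in $O_2$ leftward one step at a time: at each stage $\gamma_1$ and the preceding crimp are simultaneously applicable (persistence), so swapping them preserves both the resulting pattern and the set of crimps performed. This rewrites $O_2$ as an equivalent exhaustive sequence that begins with $\gamma_1$ and performs exactly the same crimps. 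Deleting $\gamma_1$ from the front of $O_1$ and of this rewritten $O_2$ yields two exhaustive crimp sequences on the strictly smaller pattern $C' = \mathrm{crimp}_{\gamma_1}(C)$; by the induction hypothesis they perform the same crimps, and adding $\gamma_1$ back shows that $O_1$ and $O_2$ perform the same set of crimps. In particular, every crimp of $O_1$ lies in $O_2$.

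The main obstacle is the diamond property, and within it the single delicate configuration in which $\gamma$ and $\delta$ are adjacent and share their common local-maximum interval $X$: here crimping one genuinely changes a bounding interval of the other, so commutativity is not formal and must be verified by computing the merged lengths (and, in the odd cases, the surviving-crease locations via \autoref{lem:crimp}) in both orders and confirming that they agree across all parities of $\gamma$ and $\delta$. A secondary point requiring care is pinning down a notion of crimp identity that is stable under these reorderings, so that matching a crimp of $O_1$ with a crimp of $O_2$ is unambiguous even though their bounding intervals may have grown by different amounts in the two sequences.
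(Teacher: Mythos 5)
Your argument is correct, but it takes a genuinely different route from the paper's. The paper proves \autoref{lem:order} by contradiction: it picks the first crimp $Z_1$ of $O_1$ absent from $O_2$, develops explicit folding maps (equation~(\ref{eq:foldingmap})) to show that creases remaining adjacent under both sequences end up the same distance apart (\autoref{lem:intervaldistance}), and then runs a case analysis on how the creases of $Z_1$'s crimpable sequence meet the end sequence produced by $O_2$, in each case exhibiting an earlier crimp of $O_1$ missing from $O_2$. You instead give a local-confluence (diamond-lemma) induction: a crimpable plateau persists, as the same set of creases, under any other crimp, and two simultaneously available crimps commute, so the first crimp of $O_1$ can be bubbled to the front of $O_2$ and the argument recurses on the smaller pattern. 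The two delicate points you flag are indeed the only ones, and both check out: when two plateaus share their bounding local-maximum interval, the merged lengths and survivor positions agree in either order for all four parity combinations (the survivor of an odd crimp sits at fixed distances from its neighbors by \autoref{lem:crimp}, so its position is order-independent), and identifying a crimp with its crease set is stable because distinct local minima are crease-disjoint, a plateau's internal spacing is never altered, and its bounding intervals only grow (\autoref{lem:crimpgrow}). Your approach yields the stronger, cleaner conclusion that the multiset of crimps and the final pattern are wholly order-independent --- which is what the proof of \autoref{thm:main} actually invokes --- at the cost of writing out the four-parity commutation check; the paper's approach avoids pinning down crimp identity across reorderings but pays with the folding-map machinery and a longer case analysis. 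To finish your write-up, spell out the three remaining parity cases of the diamond and the observation that two distinct crimpable sequences cannot share a crease.
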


\autoref{lem:order} is independent of the algorithm itself, so to maintain the flow of presentation we defer a proof of this lemma to the Appendix. 
Our proof of correctness uses two more fundamental lemmas. 

\begin{lemma}[\cite{DOR07}]
Any folding of a foldable 1D crease pattern can be performed by a sequence of monocrimps and end folds.
\label{lem:foldbycrimps}
\end{lemma}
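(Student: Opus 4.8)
The plan is to induct on the number of creases $n-1$ in the pattern. The base case $n-1=0$ is a single unfolded interval, which is already flat, so no operations are needed. For the inductive step, given a flat folding $f$ of a pattern with at least one crease, I would first locate a single operation---either a monocrimp or an end fold---that is \emph{compatible} with $f$, perform it to obtain a reduced pattern with strictly fewer creases whose induced state is the restriction of $f$ to the fused paper, and then invoke the inductive hypothesis on the reduced pattern. Prepending the chosen operation to the sequence supplied by the hypothesis then realizes $f$.

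To find the operation, I would examine the interval-length sequence and consider a globally shortest interval, more precisely a maximal run of consecutive shortest intervals. Two cases arise. If some shortest interval is an end interval (the leftmost $[c_0,c_1]$ or the rightmost $[c_{n-1},c_n]$), then its unique neighbor is at least as long, so an end fold may be applied to that end interval in accordance with $\mu$; this is exactly the situation exploited in the proof of~\autoref{lem:endfold}. Otherwise every shortest interval lies in the interior, so a maximal run of equal shortest intervals is bounded on both sides by strictly longer intervals and is therefore a crimpable sequence. By~\autoref{thm:crimpforce} the creases of a crimpable sequence in a foldable pattern cannot all share a single label, so this sequence contains an adjacent pair of opposite $MV$ parity, on which a monocrimp may be performed (as formalized in~\autoref{lem:crimpgrow} and~\autoref{cor:intervalgrow}). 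This interval-length dichotomy guarantees that at least one candidate operation always exists.

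It then remains to argue compatibility: that the chosen operation can be carried out as the first fold of a process realizing $f$, and that the folding it induces on the reduced pattern is again flat. This is where the non-crossing requirement of a flat folding does the real work. For the monocrimp, the strictly shortest middle interval $[c_i,c_{i+1}]$ maps under $f$ to a segment whose endpoints are the images of $c_i$ and $c_{i+1}$; because this interval is strictly shorter than both neighbors and the two creases have opposite parity, the little flap is the innermost feature there, so it can be separated (unfolded) without any intervening layer forcing a self-crossing. Hence performing this crimp leaves a valid flat folding of the pattern obtained by fusing the three intervals into one of length given by equation~(\ref{eq:crimplen}). The analogous claim for an end fold is simpler, since the shortest end interval lies flat against its single neighbor with nothing trapped beyond it, matching the end-fold reduction of~\autoref{lem:endfold}.

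I expect this compatibility step to be the main obstacle. The combinatorial dichotomy on interval lengths and the existence of an opposite-parity pair follow immediately from~\autoref{thm:crimpforce}, but upgrading ``shortest interval'' to ``first foldable feature'' requires a careful layer-ordering argument verifying that the minimal crimp or end flap is not blocked by intervening paper, and hence that its removal preserves the non-crossing property of $f$. This is precisely the one-dimensional crimp/end-fold reduction attributed to~\cite{DOR07}, and the bulk of a complete proof would be spent on this non-interference verification; once it is in hand, the induction closes.
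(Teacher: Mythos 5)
The paper does not actually prove this statement: it is imported wholesale from~\cite{DOR07}, and no argument for it appears anywhere in the text, so there is no in-paper proof to compare yours against. Judged on its own, your outline is the standard induction from the cited source and the skeleton is sound: induct on the number of creases, use the dichotomy on a globally shortest interval (end interval versus interior maximal run of minimum-length intervals) to exhibit either an end fold or a crimpable sequence, extract an adjacent opposite-parity pair via~\autoref{thm:crimpforce}, perform the operation, and recurse on the fused pattern.

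The gap is the one you name yourself: the compatibility step is asserted, not proved, and as sketched it is not quite right. First, if the maximal run of shortest intervals has length greater than one, an adjacent opposite-parity pair chosen in its interior has neighbors of \emph{equal} length, so your premise that the middle interval is ``strictly shorter than both neighbors'' fails for exactly the configurations where the argument is hardest. Second, ``the little flap is the innermost feature there'' does not follow from local length comparisons alone: in the given folding $f$, other intervals of the same (globally minimal) length elsewhere in the pattern could in principle have images coinciding with the flap's image and be interleaved between its layers, so that unfusing the three intervals and later re-crimping them innermost would not reproduce $f$'s layer order. Ruling this out requires the global argument that any paper trapped between two of the three crimp layers is enclosed by the folds at $c_i$ and $c_{i+1}$, hence confined to the image of the shortest interval, hence consists only of minimum-length intervals, whose maximal run is bounded by strictly longer intervals that cannot themselves be trapped --- a contradiction. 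That enclosure argument is the actual content of the lemma, and without it the induction does not close. Since the result is cited rather than reproved here, the pragmatic fix is to cite~\cite{DOR07} as the paper does; if you do want a self-contained proof, the layer-ordering step must be written out in full.
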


\begin{lemma}
Let $(C, \mu)$ be a foldable $MV$ pattern, and let $(a_i, a_{i+1})$ be a pair of adjacent creases of opposite $MV$ parity in a crimpable sequence of $C$. Let $\mu'$ be identical to $\mu$, with the exception that $\mu'(a_i) = \mu(a_{i+1})$ and $\mu'(a_{i+1}) = \mu(a_i)$. Then $\mu'$ is a foldable $MV$ assignment.     
\label{lem:switch}
\end{lemma}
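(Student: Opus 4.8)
The plan is to reduce the statement to a single monocrimp and exploit the fact that the monocrimp on $(a_i,a_{i+1})$ looks identical whether their labels read $(M,V)$ or $(V,M)$. First I would observe that, since $a_i$ and $a_{i+1}$ have opposite $MV$ parity under \emph{both} $\mu$ and $\mu'$, a monocrimp on this pair is a legal operation for each assignment. Crucially, this monocrimp fuses the same three intervals into a single interval whose length is given by~(\ref{eq:crimplen}), which is assignment-independent; and since $\mu$ and $\mu'$ agree on every crease other than $a_i,a_{i+1}$ (both of which disappear under the crimp), the two monocrimps produce one and the same reduced $MV$ pattern, call it $(C',\nu)$. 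Hence it suffices to prove that $(C',\nu)$ is foldable: a flat folding of $(C,\mu')$ is then obtained by first performing the monocrimp on $(a_i,a_{i+1})$ (legal under $\mu'$) and then flat-folding $(C',\nu)$.

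Next I would establish that $(C',\nu)$ is foldable, using the foldability of $(C,\mu)$. By~\autoref{lem:foldbycrimps}, $(C,\mu)$ admits a flat folding realized as a sequence of monocrimps and end folds, and the goal is to produce such a sequence that \emph{begins} with the monocrimp on $(a_i,a_{i+1})$; its remaining operations then flat-fold exactly $(C',\nu)$. Equivalently, and perhaps more directly, I would take any flat folding $\phi$ of $(C,\mu)$ and read off a flat folding of $(C',\nu)$ by collapsing the crimp: because a monocrimp does not change the direction of travel along the line (both an $M$ and a $V$ fold reverse direction), the net trajectory of the fused interval in $\phi$ has exactly the endpoints prescribed for the single interval of $(C',\nu)$, while every crease outside the pair keeps its $\phi$-image. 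The induced map is then the natural candidate flat folding of $(C',\nu)$.

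The main obstacle is the layer ordering: collapsing the crimp's flap is legal only if no other portion of the paper is trapped between the three overlapping layers of the crimp. I expect this to be exactly where the hypothesis that $(a_i,a_{i+1})$ lies in a crimpable sequence is used: the middle interval is strictly shorter than both neighbors (\autoref{lem:crimpgrow},~\autoref{cor:intervalgrow}), so the flap is confined to the interval it covers and its two bounding folds are innermost --- no layer extending beyond the flap can sit between the flap's layers without crossing one of the two folds, which $\phi$ forbids. This confinement should both certify that the collapse yields a non-crossing folding of $(C',\nu)$ and show that the flap may be re-oriented freely, so that the $(V,M)$ orientation required by $\mu'$ is equally realizable. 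For the reordering route, the same confinement should let me commute the monocrimp on $(a_i,a_{i+1})$ past the unrelated folds of the sequence from~\autoref{lem:foldbycrimps} to the front (in the spirit of~\autoref{lem:order}), after which the tail of the sequence folds $(C',\nu)$.
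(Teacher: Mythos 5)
Your core idea --- take a flat folding of $(C,\mu)$ and re-orient the flap formed by the crimp at $(a_i,a_{i+1})$ --- is exactly the paper's proof, which consists of the single observation that a flat folding of $(C,\mu')$ is identical to one of $(C,\mu)$ except that the overlap order of the layers at $a_i$ and $a_{i+1}$ is switched. The extra scaffolding you add (collapsing to a reduced pattern $(C',\nu)$, invoking \autoref{lem:foldbycrimps}, commuting the monocrimp to the front of the sequence) is not needed for this. One caution: your confinement argument only excludes interlopers that \emph{extend beyond} the flap's footprint; paper whose image lies wholly inside that footprint can still be sandwiched between two of the three crimp layers (entering through the open side of the pocket), so the re-orientation step needs a word about re-stacking such trapped material --- though the paper's own one-line proof is no more explicit on this point.
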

\begin{proof}
A flat folding of $(C, \mu')$ is identical to a flat folding of $(C, \mu)$, with the only exception that the overlap order of the two layers corresponding to the creases $a_i$ and $a_{i+1}$ gets switched. 
\end{proof}

\noindent
Our main result is stated by the following theorem. 

\begin{theorem}
Given a flat-foldable $MV$ pattern ($C, \mu$), the {\sc ForcingSet} algorithm (\autoref{alg:forceSet}) produces a minimum forcing set $F$ of ($C, \mu$) in time $O(n)$, where $n$ is the number of creases in $C$. 
\label{thm:main}
\end{theorem}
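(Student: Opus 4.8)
The theorem packages three claims — that $F$ is a forcing set, that it is minimum, and that it is produced in $O(n)$ time — and I would establish them in that order. The time bound is the easy part: \autoref{lem:crimpforest-time} builds the crimp forest in $O(n)$, and the loop of \autoref{alg:forceSet} is a single preorder traversal spending $O(1)$ work per crease (each crease lies in only $O(1)$ crimpable sequences across the whole run), so the total is $O(n)$.

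For soundness I would show that every foldable assignment $\mu_2$ agreeing with $\mu_1$ on $F$ must equal $\mu_1$. By \autoref{lem:forest} the crimp forests of $\mu_1$ and $\mu_2$ are structurally identical, with corresponding sequences $\alpha_1 \leftrightarrow \alpha_2$ of equal size, equal interval distances, and matching crease positions. I would then induct over the forest in the algorithm's preorder (each parent before its children), with the invariant that after a node is processed, $\mu_2 = \mu_1$ at every position of that node's sequence. The two workhorses are \autoref{lem:force} (an $F$-crease at a position of $\alpha_1$ forces the same label at that position of $\alpha_2$) and \autoref{thm:crimpforce} (the $M/V$ balance is $0$ for an even sequence and $\pm 1$ for an odd one). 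At an even node of size $2k$ all $M$-creases lie in $F$, so \autoref{lem:force} pins $k$ positions to $M$ and \autoref{thm:crimpforce} forces the other $k$ to $V$. At an odd node of size $2k+1$ whose survivor is in $F$, the $k+1$ majority creases lie in $F$, and the same two facts fix the whole sequence. The delicate case is an odd node whose survivor is not in $F$, where only the $k$ minority positions are pinned directly: such a node is never a root, since by \autoref{obs:a} an odd root has its survivor among the end creases, all of which are placed in $F$ at initialization. Hence its parent was processed earlier; the invariant there gives $\mu_2 = \mu_1$ at the survivor's position, which by \autoref{obs:b} fixes the majority direction of $\alpha_2$, and \autoref{thm:crimpforce} then fixes the rest. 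Since the end creases are all in $F$ and every remaining crease is fresh in exactly one node, this yields $\mu_2 = \mu_1$.

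For minimality I would characterize foldability locally and then bound an arbitrary forcing set $F'$ node by node. Iterating \autoref{lem:switch} shows that, within a crimpable sequence viewed in isolation, every label ordering with the balance prescribed by \autoref{thm:crimpforce} is foldable; and by \autoref{lem:crimp} an even sequence crimps away without altering the surrounding pattern, so its assignment can be replaced by any balanced ordering while preserving \emph{global} foldability. Two families of perturbations of $\mu_1$ then drive the lower bound: (i) reordering the labels within a single sequence while keeping its multiset, which is always available and which $F'$ can only defeat by fixing enough creases to leave the free ones monochromatic; and (ii) flipping the majority direction of an odd sequence, which flips its surviving crease and so must be compensated higher in the tree. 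A node at which (ii) is available forces $F'$ to contain the full majority ($k+1$ of $2k+1$), whereas a node at which the majority direction is already blocked from outside forces only $k$; an even node, having no survivor, forces $k$. Together with the requirement that every end crease lie in $F'$ (each is independently flippable by \autoref{lem:endfold}), and after charging each forced crease to the node where it first appears, the sum of these bounds should exactly reproduce the totals added by \autoref{alg:forceSet}, giving $|F'| \ge |F|$.

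The crux, and the step I expect to fight with, is the coupling between a node and its parent through the surviving crease. In the soundness argument this is precisely why the traversal is top-down and is cleanly absorbed by \autoref{lem:force}; in the minimality argument it is harder, because whether perturbation (ii) is available at an odd node is itself a recursive, tree-dependent condition that must be shown to coincide exactly with the algorithm's test of whether the survivor is already in $F$ — note that fixing the survivor's label does \emph{not} by itself block a majority flip, since the flip changes which crease survives. Two further points need care: making the per-node charges genuinely disjoint when a node contains survivors that are fresh only deeper in the tree, and — most delicately — verifying that each local modification used in a lower bound (a within-sequence reordering, or a majority flip together with the relabelings it forces up the tree and into the end sequence) actually extends to a globally flat-foldable assignment. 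For that extension I would rely on \autoref{lem:foldbycrimps}, on \autoref{lem:order} so that the order in which crimps are resolved is immaterial, and on the structural guarantees of \autoref{lem:forest}.
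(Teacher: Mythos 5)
Your time bound and your soundness argument are workable, but note two things. First, the justification for $O(n)$ is off: a crease that keeps surviving appears in a whole chain of nodes from a leaf to a root, so it can lie in $\Theta(\mathrm{height})$ crimpable sequences, not $O(1)$; the correct reason is that the total length of all crimpable sequences in the forest is $O(n)$ (a node of size $\ell$ performs $\lfloor\ell/2\rfloor\ge\ell/3$ monocrimps, each permanently removing two creases). Second, your soundness proof is a genuinely different and cleaner organization than the paper's. The paper argues by contradiction from a deepest pair of dissimilar nodes and, in the hard case (odd node with only the minority in $F$), must propagate a surplus $M$ crease up the tree to the first ancestor with majority $M$ or balanced counts to contradict \autoref{thm:crimpforce}. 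Your top-down invariant absorbs that case locally: the roots are determined unconditionally (an even root directly; an odd root because \autoref{obs:a} puts its survivor in the end sequence, hence in $F$), and at a non-root odd node whose survivor is not in $F$ you read the majority direction off the already-determined parent via \autoref{obs:b}. This works and avoids the paper's $w_1$ machinery.

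The genuine gap is in minimality. The paper's lower bound is a short global counting argument that your plan misses entirely: by \autoref{lem:foldbycrimps} and \autoref{lem:order}, every flat folding of $(C,\mu)$ uses the same set of $m$ monocrimps; each monocrimp involves a pair of adjacent opposite-parity creases, and by \autoref{lem:switch} swapping that pair yields a different foldable assignment agreeing with $\mu$ everywhere else, so any forcing set must contain at least one crease of each pair. Since each crease is removed by exactly one monocrimp, these $m$ pairs are pairwise disjoint and disjoint from the $e$ end creases (all of which must be in any forcing set by \autoref{lem:endfold}), giving $|\texttt{OPT}|\ge m+e$. One then checks separately that the algorithm adds exactly $\lfloor\ell/2\rfloor$ \emph{new} creases per node of size $\ell$ (when the majority is added, its survivor is already in $F$), so $|F|=m+e$.

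Your substitute --- per-node lower bounds of $k$ or $k{+}1$ depending on whether a ``majority flip'' is available, summed to match the algorithm's additions --- does not close. The claim that a flip-available odd node forces the \emph{full majority} ($k{+}1$ creases) into an arbitrary forcing set $F'$ is a statement about which creases rather than how many; even where it holds it must then be deflated by the overlap of the survivor with the parent's sequence or the end sequence, and whether the flip is ``available'' is the recursive, tree-dependent condition you yourself flag as unresolved, along with the disjointness of the charges and the global foldability of each perturbation. None of this is needed: the only perturbation required for the lower bound is the single-pair swap of \autoref{lem:switch}, applied once per monocrimp.
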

\begin{proof}
By~\autoref{lem:crimpforest-time}, the {\sc CrimpForest} algorithm constructs the crimp forest $W$ in linear time. The {\sc ForcingSet} algorithm then processes each crimpable sequence $s$ corresponding to a node in $W$ in time linear in the length of $s$. Since the total length of all crimpable sequences in the forest is $O(n)$, we obtain linear time complexity for the {\sc ForcingSet} algorithm.  

Our proof of correctness has two parts:  first we prove that $F$ is indeed a forcing set for $(C, \mu)$, then we prove that $F$ has a minimum size. 

\paragraph{$F$ is forcing} Assume to the contrary that there exists a different foldable $MV$ assignment $\mu_2$ for $C$ that agrees 
with $\mu$ on $F$. For symmetry, let $\mu_1 = \mu$ and run the {\sc ForcingSet} algorithm with input $(C,\mu_1)$ to create forest $W_1$, then run it again with input $(C,\mu_2)$ to create forest $W_2$. By property (1) of~\autoref{lem:forest}, $W_1$ and $W_2$ are structurally equivalent.

Consider first the case when there is a pair of corresponding trees $T_1 \in W_1$ and $T_2 \in W_2$ that have
at least one pair of corresponding nodes with dissimilar crimpable sequences.
By~\autoref{lem:forest}, their crimpable sequences involve the same number of creases, have the same interval
lengths, and first-time creases 
have the same 
locations in both. Therefore the sequences differ only in their $MV$ assignments.
Of all such pairs, let $v_1 \in T_1$ and $v_2 \in T_2$ be a pair of maximal depth in their trees.

Let $\ell$ be the length of the crimpable sequences corresponding to $v_1$ and $v_2$. If $\ell$ is even, then $v_1,v_2$ are root nodes in $T_1,T_2$
(by~\autoref{obs:a}).
In this case, the {\sc ForcingSet} algorithm places in $F$ the $\ell/2$ creases of $v_1$ with assignment $M$. Then by~\autoref{lem:force}, 
there are $\ell/2$ creases in $v_2$ in the same positions that are also assigned $M$. 
By~\autoref{thm:crimpforce}, the remaining $\ell/2$ creases in both $v_1,v_2$ must have $V$ assignments. 
But then $v_1,v_2$ have similar crimpable sequences, a contradiction.

Now consider the case when $\ell$ is odd. If the creases of $v_1$ with the majority $MV$ assignment are in $F$, then
by~\autoref{lem:force} $v_2$ has creases of the same $MV$ assignment located in the same 
positions. All other creases in $v_1,v_2$ must have the opposite assignment by~\autoref{thm:crimpforce}.
But then $v_1,v_2$ are similar, a contradiction. 

\begin{figure}[htp]
    \centering
   \includegraphics[width=\textwidth]{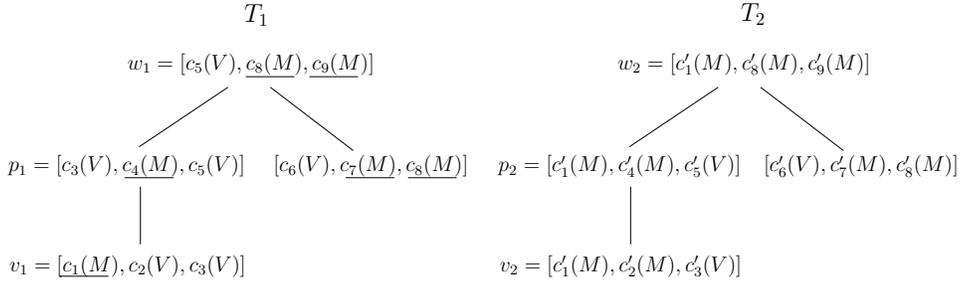}
    \caption{ Proof of correctness for the {\sc ForcingSet} algorithm (\autoref{alg:forceSet}) from~\autoref{sec:force}. Underlined are the creases included in the forcing set $F$.}
    \label{fig:proofex}
\end{figure}

If instead the creases of $v_1$ with the minority $MV$ assignment are in $F$, then assume without loss of generality that the minority
assignment is $M$. This case is depicted in~\autoref{fig:proofex}. The survivor of $v_1$ is not in $F$ (because otherwise the creases of $v_1$ with the majority $MV$ assignment would be in $F$, as indicated by the second branch of the conditional in the {\sc ForcingSet} algorithm), and it has a $V$ assignment (because the survivor always has the majority assignment, cf.~\autoref{obs:b}). 

Next we show that on the path from $v_1$ to the root $r_1$ of $T_1$, we must encounter a node with majority assignment
$M$, or an equal number of $M$ and $V$ assignments. 
Suppose that this is not the case. Then 
each node $u$ on the path from $v_1$ to $r_1$ has majority $V$ assignments, and a crimp operation on $u$  produces a surviving crease of type $V$.  In particular, the crease surviving the crimp operation on $r_1$ is an end crease (cf.~\autoref{obs:a}), and therefore is in $F$ (cf. step 2 of the forcing set algorithm).  But in this case the {\sc ForcingSet} algorithm would place in $F$ the majority $V$ creases from $r_1$, and would do the same at each node on the path from $r_1$ down to and including node $v_1$ (because the crimp survivor at each such node would be in $F$ at the time the node would be processed). But this contradicts our assumption that the $M$ creases from $v_1$ are in $F$.
So let $w_1$ be the first node encountered on the path from $v_1$ to $r_1$ having majority assignment
$M$ or an equal number of $M$ and $V$ assignments. (See the node labeled $w_1$ in~\autoref{fig:proofex}.) 

By~\autoref{lem:force}, if the minority $M$ creases of $v_1$ are in $F$, then $v_2$
has $(\ell-1)/2$ creases of type $M$ located in the same 
positions. 
Therefore, any differences between $v_1,v_2$ must be among the other $(\ell+1)/2$ creases. 
Because $v_1, v_2$ have maximal depth, the crimpable sequences at
corresponding nodes in their children's subtrees are similar, and thus their
children contribute to $v_1$ and $v_2$ survivors with the same $MV$ assignment. 
The differences in $v_1,v_2$'s crimpable sequences must therefore be in first-time creases.
(Note that if $v_1,v_2$ are leaves, then all their creases are first-time creases.)

We now show that all such first-time creases in $v_2$ must have a $V$ assignment.
Suppose this is not the case (as illustrated in~\autoref{fig:proofex} where
$\mu(c_2') \neq V$). Then $M$ is the majority assignment at $v_2$, and
the survivor of $v_2$ is of type $M$ -- in contrast to $v_1$, whose
survivor is of type $V$. (In~\autoref{fig:proofex},  the survivor is 
$c'_1$ of type $M$). Consider the parents $p_1,p_2$
of $v_1,v_2$, respectively. If $p_1 \neq w_1$, then $p_1$ has majority assignment $V$. In addition, its minority
creases of type $M$ are in $F$ --- for if its majority $V$ creases were in $F$, then
the majority $V$ creases at $v_1$
would also be in $F$,
a contradiction. By~\autoref{lem:force},
each $M$ crease in $p_1$ has a corresponding $M$ crease in
$p_2$. In addition, $p_2$ has an additional crease of type $M$ propagated by $v_2$,
making $M$ the majority assignment at $p_2$. Thus $p_2$ produces a survivor of type $M$ -- in contrast to
$p_1$, which has a survivor of type $V$. We can apply this
argument at each node on the way back up the tree to the child of node $w_2$ (corresponding
to $w_1$). We recycle our notation here by refering to $w_2$'s child as $p_2$ and the corresponding node
in $T_1$ as $p_1$. 

If $w_1$ has equal $M$ and $V$ assignments, then it is the root node (see~\autoref{obs:a}),
its $M$ creases are in the forcing set, and corresponding $M$ creases are also in $w_2$.
In addition, $w_2$ has an additional crease of type $M$
propagated by $p_2$. But then the difference in $M$ and $V$
creases at $w_2$ is greater than one, contradicting~\autoref{thm:crimpforce}. 
If, on the other hand, $w_1$ has majority assignment $M$, then its $M$ creases must be in $F$ 
(otherwise the majority $V$ creases at each node from $p_1$ down to and including $v_1$ would also be in $F$, contradicting our assumption). Therefore, $w_2$ has corresponding $M$ creases. In
addition, $w_2$ also contains an additional crease of type $M$ that it got from $p_2$. But then the difference in $M$ and $V$
creases at $w_2$ is greater than one, which again contradicts~\autoref{thm:crimpforce}.

We have established that corresponding first-time creases, as well as survivors from children of $v_1$ and $v_2$, have the same $MV$ assignment in $v_1$ and $v_2$. This implies that $v_1$ and $v_2$ are similar, contradicting our choice of dissimilar sequences.

Finally, consider the case when all corresponding nodes in $W_1$ and $W_2$ have similar crimpable sequences. Then there must be
creases in $(C,\mu_2)$ not involved in any crimp operation, 
whose assignments differ from the corresponding 
creases in $(C, \mu_1)$. Such creases must be part of the end sequence
of $(C,\mu_2)$. However~\autoref{lem:order} implies that end sequence creases are independent on 
the $MV$ assignment. 
Because all end sequence creases are in $F$, it is not possible that $\mu_1$ and $\mu_2$ differ on end sequence creases, so this case is settled. It follows that $\mu_1$ and $\mu_2$ are identical and therefore $F$ is a forcing set. 

\paragraph{$F$ is minimum} Let \opt\ be a forcing set of $(C, \mu)$ of minimum size. Because \opt\ is a forcing set, all other creases in $C \setminus \opt$ must have an $MV$ assignment that agrees with $\mu$. 
By~\autoref{lem:foldbycrimps}, any folding of $(C, \mu)$ can be achieved by a series of monocrimps and end folds. By~\autoref{lem:order}, any ordering of the crimp operations produces similar sets of crimpable sequences, and similar end sequences. 
This implies that each monocrimp involved in the {\sc CrimpForest} algorithm appears in any set of monocrimps used to produce a flat folding of $(C, \mu)$. 
 
Let $m$ be the number of monocrimps involved in the crimps performed by the {\sc ForcingSet} algorithm, and let $e$ be the size of the end sequence. We show that $|\opt| = m+e$. First note that \opt\ must include at least one crease from each monocrimp; otherwise, we can switch the $MV$ assignment for the creases involved in the monocrimp and obtain a foldable $MV$ assignment that is different from $\mu$ (\autoref{lem:switch}), contradicting the fact that \opt\ is a forcing set. Also by~\autoref{lem:endfold}, \opt\ must include all the creases from the end sequence. The intersection between the set of creases involved in the monocrimp operations and the set of creases in the end sequence is empty, because each monocrimp removes the pair of creases involved. It follows that 
$|\opt| = m+e$.

We now show that $|F| = m+e$ as well. The end sequence added to $F$ in the second step of the algorithm contributes the second term to this equality. We claim that, corresponding to each crimpable sequence $\alpha$ of size $\ell$, the {\sc ForcingSet} algorithm adds to $F$ precisely $\lfloor\ell/2\rfloor$ creases. This is clear for the cases when $\alpha$ is of even length, and  when the minority of the creases  from $\alpha$ are added to $F$. Now note that the {\sc ForcingSet} algorithm adds to $F$ the majority creases from $\alpha$  only if the survivor of $\alpha$ is already in $F$. This means that the additional contribution of $\alpha$ to $F$ is again $\lfloor\ell/2\rfloor$ (because one of the creases from the majority is already in $F$). The quantity $\lfloor\ell/2\rfloor$ is precisely the number of monocrimps involved in crimping $\alpha$. Summing up over all crimps performed by the algorithm, we get the total number of creases contributed to $|F|$ by the crimpable sequences to be $m$ as well. This proves that $F$ is a minimum forcing set. 
\end{proof}

The proof of~\autoref{thm:main} suggests a simple way to reconstruct the original $MV$ assignment $\mu$, given the crease pattern $C$ and the forcing set $F$ produced by the {\sc ForcingSet} algorithm. First we construct a crimp forest from the sequence of distances induced by $C$ (note that the construction uses only interval lengths and is independent on the $MV$ assignment). By~\autoref{lem:forest}, the result is a crimp forest  $W_1$ isomorphic to the crimp forest $W$ used by the {\sc ForcingSet} to determine $F$. By~\autoref{lem:force}, crimpable sequences for  corresponding  nodes in $W_1$ and $W$ have the same number (and position) of creases that are in $F$. This implies that, for each 
crimpable sequence $\alpha$ of size $\ell$ corresponding to a node in $W_1$, at least $\lfloor \ell/2\rfloor$ creases of the same type (either $M$ or $V$) are in $F$ (as ensured by the {\sc ForcingSet} algorithm). Then we simply assign the remaining creases in $\alpha$ (the ones not in $F$) the opposite type.  Arguments similar to the ones used in the proof of~\autoref{thm:main} show that this MV assignment does not incur any conflicts for creases that occur in multiple crimpable sequences, and that it agrees with $\mu$.

\section{Conclusion}
This is the first paper that addresses the problem of finding a minimum forcing set in an arbitrary one-dimensional crease pattern. While folding problems can seem simple at first, the details of even one-dimensional folding are surprisingly complex. 
The algorithm presented in this paper finds a minimum forcing set for an arbitrary one-dimensional crease pattern in time linear in the 
number of creases.  
The algorithm complexities arise when there are consecutive sequences of equally
spaced creases, but such sequences should not be viewed as degenerate cases given that many useful origami crease patterns are tessellations. 
The most obvious direction for future work on forcing sets is to consider arbitrary two-dimensional crease patterns; some initial work has seen print on the Miura-ori crease pattern \cite{Ballinger:2015}.  Finding a general algorithm for the two-dimensional case is open.

\bibliographystyle{elsarticle-num} 
\bibliography{forcebib}

\section{Appendix}
This section is devoted to the proof of~\autoref{lem:order}, which is instrumental to the proof of correctness presented in~\autoref{sec:correct}, yet it is independent of the forcing set algorithm from~\autoref{sec:algorithm}  and can stand as a result on its own. We begin with a few insights into  folding maps, which tell us where points go when we fold sets of consecutive creases. 

\subsection{Folding Maps}
Recall that folding a crease $c_i$ means reflecting all points to one side of $c_i$  (i.e., all points to the right of $c_i$, or all points to the left of $c_i$, depending on $\mu(c_i)$) about $c_i$. 
\begin{figure}[hp]
\centerline{\includegraphics[width=0.75\linewidth]{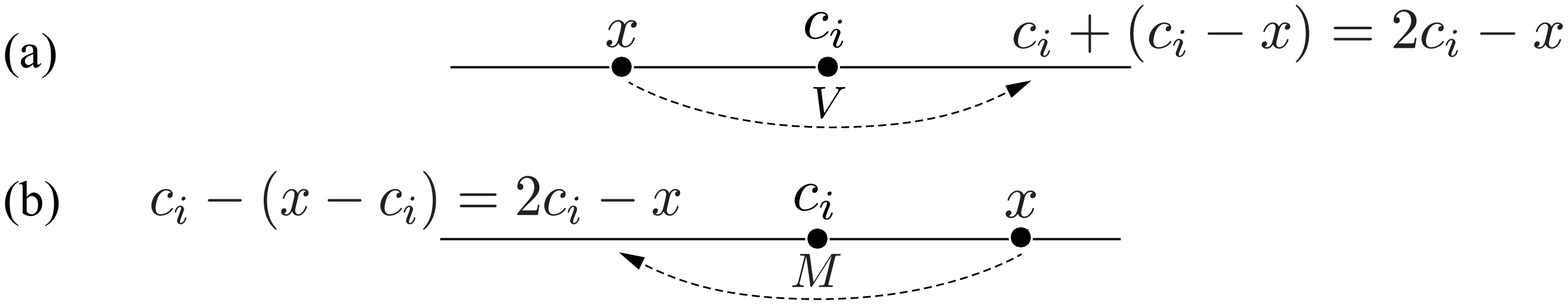}}
\caption{Reflecting point $x$ about crease $c_i$ (a) $x$ left of $c_i$ (b) $x$ right of $c_i$.}\label{fig:folding map}
\end{figure}
This operation (depicted in~\autoref{fig:folding map}) can be modeled by a function $\sigma_{c_i}:\mathbb{R}\rightarrow \mathbb{R}$ given by
\begin{equation}
\sigma_{c_i}(x) = c_i+(c_i-x) = 2c_i - x.
\label{eq:singlefold}
\end{equation}

\noindent
By composing such functions together, we can define a {\em folding map} that tells us where points go when we fold sets of consecutive creases.  If $S=(a_1,a_2,...,a_k)$ is a sequence of consecutive creases, then the {\em folding map $\sigma_{S,a_1}$ that fixes $[a_1,a_2]$} (i.e., it does not fold $a_1$) is 
\[
\sigma_{S,a_1}(x) = \sigma_{a_2} \sigma_{a_3} \sigma_{a_4} \cdots \sigma_{a_i}(x)\mbox{ where }x\in [a_i,a_{i+1}],~i < k.
\]
\begin{lemma}
Let $x$ be an arbitrary point on a line segment, and let $S = (a_1 , a_2 , ..., a_i )$ be a foldable crease pattern with the property that each fold about each crease $a_j$, with $2 \le j \le i$, affects the location of $x$. Then 
\begin{equation}
\sigma_{S,a_1}(x) =2a_2-2a_3+2a_4 -\cdots +(-1)^{i}2a_i + (-1)^{i+1}x.
\label{eq:foldingmap}
\end{equation}
\end{lemma}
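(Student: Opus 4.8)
The plan is to prove~(\ref{eq:foldingmap}) by induction on $i$, the index of the last crease in $S$, peeling off the innermost reflection $\sigma_{a_i}$ at each step. Before starting the induction I would first record the one observation that makes the whole argument purely algebraic: the hypothesis that every fold about $a_j$ (for $2 \le j \le i$) affects the location of $x$ guarantees that, as we evaluate the composition $\sigma_{a_2} \sigma_{a_3} \cdots \sigma_{a_i}(x)$ from the inside out, each map $\sigma_{a_j}$ acts on the current image of $x$ exactly as the reflection $\sigma_{a_j}(\cdot) = 2a_j - \cdot$ from~(\ref{eq:singlefold}), rather than leaving it fixed. Thus the folding map is literally the composition of these affine reflections, and the task reduces to evaluating that composition.

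For the base case $i = 2$, the sequence $S = (a_1, a_2)$ gives $\sigma_{S,a_1}(x) = \sigma_{a_2}(x) = 2a_2 - x$ directly from~(\ref{eq:singlefold}), which agrees with the right-hand side of~(\ref{eq:foldingmap}) since $(-1)^{2+1} = -1$.

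For the inductive step, assume~(\ref{eq:foldingmap}) holds for sequences ending at index $i-1$. Writing $S' = (a_1, \ldots, a_{i-1})$ and setting $y = \sigma_{a_i}(x) = 2a_i - x$, I would observe that $\sigma_{S,a_1}(x) = \sigma_{a_2} \cdots \sigma_{a_{i-1}}(y) = \sigma_{S',a_1}(y)$. The crucial bookkeeping point is that the hypothesis for $(S,x)$ is inherited by $(S',y)$: since the reflections are applied in the order $a_i, a_{i-1}, \ldots, a_2$, the statement that the fold about $a_j$ (with $j \le i-1$) affects $x$ is exactly the statement that it affects the intermediate image $y$ under the remaining folds, so the inductive hypothesis legitimately applies to $\sigma_{S',a_1}(y)$. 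Substituting $y = 2a_i - x$ into the inductive formula, the coefficient of $y$, namely $(-1)^i$, splits into the two new terms $(-1)^i 2a_i$ and $(-1)^{i+1} x$, yielding exactly~(\ref{eq:foldingmap}).

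The main obstacle is not the algebra, which is a routine telescoping of signs, but rather making this reduction rigorous: one must argue carefully that the ``affects the location'' hypothesis is precisely what certifies both that each $\sigma_{a_j}$ contributes a genuine reflection in the composition and that this property descends to the subproblem $(S',y)$ so that the inductive hypothesis can be invoked. Once that justification is in place, the alternating pattern $+2a_2, -2a_3, +2a_4, \ldots$ together with the final coefficient $(-1)^{i+1}$ on $x$ follows immediately from the fact that each additional reflection negates the coefficient of the running variable.
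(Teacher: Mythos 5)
Your proposal is correct and follows essentially the same route as the paper's proof: induction on the length of $S$, peeling off the innermost reflection about the last crease to get the intermediate point $y = 2a_i - x$, and then applying the inductive hypothesis to the prefix sequence acting on $y$. The only difference is cosmetic (you induct from $i-1$ to $i$ rather than $i$ to $i+1$, and you spell out more explicitly why the ``affects the location'' hypothesis descends to the subproblem, which the paper handles by noting that fold order affects only layering, not positions).
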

\begin{proof}
The proof is by induction on the size $i$ of $S$ (which includes the fixed crease $a_1$). The base case correspond to $i = 2$ (that is, $S = (a_1, a_2)$ and the segment bends about crease $a_2$ only). In this case the location of $x$ after folding about $a_2$ is given by~(\ref{eq:singlefold}): $\sigma_{S,a_1}(x) = 2a_2-x = 2a_2+(-1)^3x$. So the base case holds.

The induction hypothesis is that, for any sequence $S$ of length $i$, the equality~(\ref{eq:foldingmap}) holds. For the inductive step, consider a sequence $S$ of length $i+1$. Let $S = (a_1, a_2, \ldots, a_{i}, a_{i+1})$ of , and let $T = (a_1, a_2, \ldots, a_{i})$ be the subsequence consisting of the first $i$ elements of $S$. 

Notice that the ordering in which the folds about the creases $a_j \in S$ are performed may affect the overlap 
order of the layers (intervals) in the final flat folding of $S$. For the purpose of our analysis, this nesting is not important; rather, it is the relative positioning of points in the folded state of $S$ (with respect to the fixed point $a_1$) that affects the location of $x$.
Arguments similar to the ones used in the proof of~\autoref{lem:crimp} show that any folding of the creases $a_2, a_3, \ldots, a_{i+1}$ places the points in the interval $[a_2, a_{i+1}]$ in the same positions, independent of the $MV$ assignment. The $MV$ assignment affects only the layering order and does not affect the positions of the creases in the folded state. 
This allows us to fold about the crease $a_{i+1}$ first, which by~(\ref{eq:singlefold}) changes the position of point $x$ to $x' = 2a_{i+1}-x$.  We can now view $x'$ as a new point, and apply the induction hypothesis to determine the new position of $x'$ after folding about each crease in $T$: 
\begin{eqnarray*}
   \sigma_{S, a_1}(x) &  = & \sigma_{T,a_1}(x') \\
                                 & = & 2a_2-2a_3+2a_4 -\cdots +(-1)^{i}2a_i + (-1)^{i+1}x' \\
                                 & = & 2a_2-2a_3+2a_4 -\cdots +(-1)^{i}2a_i + (-1)^{i+1}(2a_{i+1}-x) \\
                                 & = & 2a_2-2a_3+2a_4 -\cdots +(-1)^{i}2a_i + (-1)^{i+1}2a_{i+1} + (-1)^{i+2}x).                                 
\end{eqnarray*}
Thus the equality~(\ref{eq:foldingmap}) holds. 
\end{proof}

\noindent
Note that these folding maps do not depend on any $MV$ assignment.  They only tell us where points go after the creases are folded, regardless of the $MV$ assignment.

\subsection{Proof of~\autoref{lem:order}}

\orderlemma*
\begin{proof}
Our proof is by contradiction. Assume to the contrary that there is a crimp in $O_1$ that is not in $O_2$. Let $Z_1$ be the first crimp in the ordered sequence $O_1$ that does not occur in $O_2$. 
Let $C_1$ be the crease sequence produced by all crimps in $O_1$ prior to $Z_1$ (and so $Z_1$ is a crimp to be performed on $C_1$), and let $\alpha = (a_1, a_2, \ldots, a_k)$ be the crimpable sequence involved in the crimp $Z_1$. Let $E_2$ be the end crease sequence produced by $O_2$ (i.e., $E_2$ contains no crimpable sequences). Refer to~\autoref{fig:crimporder}.
We distinguish several cases, depending on whether one or more creases from the crease sequence $\alpha$  appear in $E_2$ or not. 
\begin{figure}[hptb]
\centerline{\includegraphics[width=.85\textwidth]{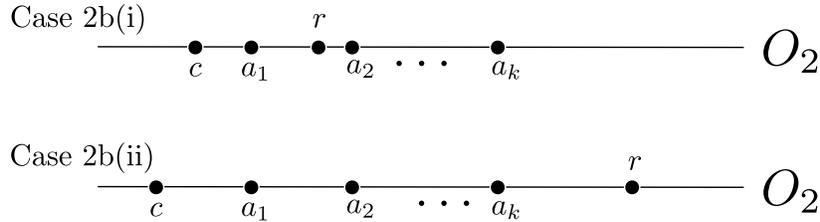}}
\caption{Case 1: all creases from $\alpha$ occur in $E_2$.}
\label{fig:crimporder}
\end{figure}

Before discussing these cases, we introduce some terminology and a new lemma that will be used in our case analysis. 
Define the \emph{interval distance} of the crimpable sequence 
$\alpha$ to be $|a_2-a_1|$ (which by definition equals $|a_{i+1}-a_i|$ for any $i=1,\ldots, k-1$).  
For ease of presentation, we also define the \emph{interval distance of a crimp} to be equal to the interval distance of the crease sequence involved in the crimp. 

\begin{lemma}\label{lem:intervaldistance}
Let $O_1$ and $O_2$ be two sequences of crimps performed on a crease pattern $C = (a_0, a_1, \ldots, a_n)$. Let $Q_1$ and $Q_2$ be the crease sequences produced by $O_1$ and $O_2$, respectively. If $a_i$ and $a_j$ are adjacent creases in both $Q_1$ and $Q_2$, for some $i, j \in [0, n]$, then the distance $|a_j - a_i|$ is the same in $Q_1$ and $Q_2$. 
\end{lemma}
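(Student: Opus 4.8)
The plan is to reduce the statement to a purely geometric fact about folding maps, which, as noted just after the folding-map lemma, are independent of both the $MV$ assignment and the order in which folds are carried out. First I would observe that a crimp only deletes adjacent pairs of creases and fuses the intervals between them; it never reorders the creases that survive. Consequently the surviving creases of $Q_1$ (and of $Q_2$) appear in the same left-to-right order as in $C$. So if $a_i$ and $a_j$ (say $i<j$) are adjacent in both $Q_1$ and $Q_2$, then in each of the two crimp sequences exactly the creases $a_{i+1},\ldots,a_{j-1}$ have been removed, and the interval separating $a_i$ from $a_j$ in each resulting pattern is the fusion of the original intervals $[a_i,a_{i+1}],[a_{i+1},a_{i+2}],\ldots,[a_{j-1},a_j]$.

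Next I would identify the interval distance $|a_j-a_i|$ in the reduced pattern with a folding map. Since folding is a piecewise isometry, the length of the fused interval equals the distance, in the folded state, between the image of $a_j$ and the (fixed) crease $a_i$. Fixing the interval to the left of $a_i$ and folding about each of $a_{i+1},\ldots,a_{j-1}$ (each of which lies strictly between $a_i$ and $a_j$ and therefore moves $a_j$), the image of $a_j$ is given by the folding map $\sigma_{S,a_i}$ for $S=(a_i,a_{i+1},\ldots,a_j)$. By equation~(\ref{eq:foldingmap}) this image is a fixed alternating sum $2a_{i+1}-2a_{i+2}+\cdots$ of the original crease positions, so $|a_j-a_i|$ equals $|\sigma_{S,a_i}(a_j)-a_i|$, an expression determined entirely by the original positions $a_i,a_{i+1},\ldots,a_j$. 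As the discussion of folding maps emphasizes, this value does not depend on the $MV$ assignment, and the positional argument used in the proof of~\autoref{lem:crimp} shows it does not depend on the order in which the intermediate creases are folded away.

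Finally I would conclude: because $O_1$ and $O_2$ remove the same set of intermediate creases $\{a_{i+1},\ldots,a_{j-1}\}$ (by the first step) and the folded distance is a fixed function of the original positions of $a_i,\ldots,a_j$ (by the second step), the distance $|a_j-a_i|$ must be identical in $Q_1$ and $Q_2$.

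I expect the main obstacle to be the second step: rigorously matching the abstract \emph{interval distance} in the reduced crease pattern with the geometric distance produced by the folding map. For a single monocrimp this is exactly equation~(\ref{eq:crimplen}), but when the creases $a_{i+1},\ldots,a_{j-1}$ are eliminated by several crimps performed in an arbitrary interleaved order, one must argue that the net fused length is still the single alternating sum of~(\ref{eq:foldingmap}). The order- and assignment-independence of the folding map (established by the reflection composition in that lemma together with the positional argument in~\autoref{lem:crimp}) is precisely what makes this work, so the care goes into invoking those facts correctly rather than into any new computation.
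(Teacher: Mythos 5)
Your proposal is correct and follows essentially the same route as the paper's proof: both reduce the claim to the observation that the creases strictly between $a_i$ and $a_j$ must all have been eliminated in each crimp sequence, and then invoke the folding map formula~(\ref{eq:foldingmap}) to conclude that the fused distance is a fixed alternating sum of the original crease positions, independent of the $MV$ assignment and of which crimps did the eliminating. The only difference is presentational — you make explicit the order-preservation of surviving creases and flag the identification of interval length with folded distance as the delicate step, which the paper asserts more tersely.
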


\begin{proof}
Because $Q_1$ ($Q_2$) is the result of multiple crimps in $O_1$ ($O_2$), the creases $a_i$ and $a_j$ might not have started out in $C$ as being adjacent.  Let $d = |a_j - a_i|$ be the original distance between these two creases in $C$.  
If there are no creases between $a_i$ and $a_j$ in $C$ (i.e, $j = i+1$), then $a_i$ and $a_j$ have not been involved in any crimp in  $O_1$ ($O_2$), therefore the distance between $a_i$ and $a_j$ in $Q_1$ ($Q_2$) is also $d$. 


Suppose now that there is a sequence of creases $b_1, ..., b_k$ between $a_i$ and $a_j$ in $C$. Because $a_i$ and $a_{j}$ are adjacent in $Q_1$ ($Q_2$), the creases $b_j$, $j = 1, \ldots, k$ must have been removed by crimps in $O_1$ ($O_2$). 
%
%
Let $B = (a_i, b_1, ..., b_k)$. The folding map~(\ref{eq:foldingmap}) shows us that no matter what crimps in $O_1$ or $O_2$ removed the creases $b_j$, the resulting distance between $a_i$ and $a_{i+1}$ will be 
\[
|\sigma_{B,a_i}(a_{j}) - a_i| = |2b_1 - 2b_2 + \cdots +(-1)^k 2b_k +(-1)^{k+1}a_{j} - a_i|.
\]
Therefore the distance between $a_i$ and $a_j$ is the same in $Q_1$ and $Q_2$. 
\end{proof}

\noindent
We now turn to our case analysis on the intersection between the crease sequence $\alpha$ and the end sequence $E_2$. 

\vspace{.1in}
\noindent{\bf Case 1:}  All creases from $\alpha$ occur in $E_2$: $a_1, ..., a_k \in E_2$.
Assume first that $a_1, ..., a_k$ are consecutive in $E_2$.  (See~\autoref{fig:crimporder}.) By~\autoref{lem:intervaldistance}, the distance between $a_i$ and $a_{i+1}$ in $E_2$ is equal to the distance between $a_i$ and $a_{i+1}$ in $C_1$, for $i=1, ..., k-1$.  (Recall that $C_1$ is the crease sequence produced by all crimps in $O_1$ prior to $Z_1$.) 
Because $E_2$ is an end sequence, we can assume without loss of generality that distances between adjacent creases from the left end of the paper up to $a_1$ are monotonically increasing (otherwise, we can reflect the paper about the left end and reverse the labeling of the $\alpha$ sequence, so that this assumption holds). 

In the crease sequence $E_2$, let $x_2$ be the distance from $a_1$ to the crease $b_2$ on its left (note that $b_2$ may be the left end of the paper, if no other creases exist to the left of $a_1$).  By our assumption, $x_2 \le a_2-a_1$.  In $C_1$ however, the distance $x_1$ from $a_1$ to the crease $b_1$ on its left (which again could be the left end of the paper) is strictly greater than $a_2-a_1$.  (This is a necessary condition for the crimp $Z_1$ to be possible, or simply by definition of a crimp.)  So the inequalities 
$x_1 > a_2-a_1 \ge x_2$ hold, therefore $x_1 > x_2$. By~\autoref{lem:crimpgrow}, the interval to the left of $a_1$ in the initial crease sequence $C$ is no longer than $x_2$ (because it may only increase or stay the same in size with each crimp). This implies that some crimp $Z$ must have occurred in $O_1$ prior to $Z_1$ that lengthened the interval to the left of $a_1$ from some size below $x_1$ to size $x_1$. Note that the crimp $Z$ could not occur in $O_2$, because the interval to the left of $a_1$ in $E_2$ is shorter than the one produced by the crimp $Z$. This contradicts our choice of $Z_1$ (because $Z$ comes before $Z_1$ in the sequence $O_1$, and $Z \notin O_2$). 

Assume now that there exists an $i$ such that $a_i$ and $a_{i+1}$ are not consecutive in $E_2$.  Let $c$ be a crease in $E_2$ between $a_i$ and $a_{i+1}$.  Since $c \notin \alpha$, some crimp $Z'$ must have occurred in $O_1$ prior to $Z_1$ to remove $c$.  Now note that $Z' \notin O_2$, because $c \in E_2$ and $Z'$ removes $c$.  This again contradicts our choice of $Z_1$, because $Z' \notin O_2$ occurs before $Z_1$. 

\vspace{.1in}

\noindent{\bf Case 2:}  There exists an $i$ such that $a_i\not\in E_2$.
Let $i$ be such that $a_i$ is the first crease from among all creases in the sequence $\alpha$ to be removed by a crimp $Z_2 \in O_2$ (thus $a_i \not\in E_2$).  Consider the crease sequence $C_2$ produced by all crimps in $O_2$ prior to $Z_2$ (and so $Z_2$ is a crimp to be performed on $C_2$). Note that all creases $a_1, a_2, \ldots, a_k$ are in $C_2$, because $a_i$ is the first of these creases to be removed, and $C_2$ is the sequence prior to removing $a_i$. Because at least two adjacent creases are removed by a crimp, there must have been a neighbor $c$ of $a_i$ that was removed by $Z_2$ along with $a_i$ (by the definition of a crimp). 


\begin{figure}[hptb]
\centerline{\includegraphics[width=.9\textwidth]{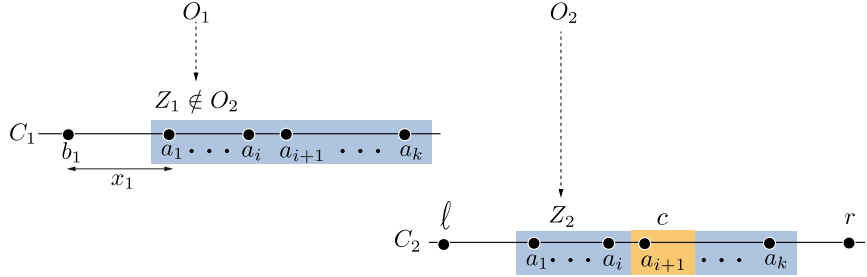}}
\caption{Case 2a: $a_i$ and $c$ removed by the crimp $Z_2$.}
\label{fig:crimporder2}
\end{figure}

\noindent
{\bf Case 2a:}  $c=a_{i-1}$ or $c=a_{i+1}$. (See~\autoref{fig:crimporder2}).
Assume without loss of generality that $c=a_{i+1}$.  By~\autoref{lem:intervaldistance}, the interval distance $d=a_{i+1}-a_{i}$ of $Z_1$ equals the interval distance $a_{i+1}-a_i$ of $Z_2$.  By the definition of a crimp, the intervals immediately to the left/right of the creases involved in $Z_2$ must both be strictly longer than $a_{i+1}-a_i$.  

Let $r$ be the crease in $C_2$ immediately to the right of the creases involved in $Z_2$.  We argue that $r$ lies to the right of $a_k$.  Otherwise, if $\exists j>i$ such that $r$ lies between $a_j$ and $a_{j+1}$, then 
\[
r-a_j \leq d=a_{i+1}-a_i, 
\]
because by~\autoref{lem:crimpgrow} the interval to the right of $a_j$ can only increase or stay the same in size until it reaches the limit $a_{j+1}-a_j = d$ in $C_1$.  This contradicts the requirement that $r-a_j> d$ for the crimp $Z_2$ to take place.  Hence $r$ lies to the right of $a_k$, and similarly the crease $\ell$ immediately to the left of the creases involved in $Z_2$ lies to the left of $a_1$.  Similar arguments show that there are no creases in $C_2$ between $a_j$ and $a_{j+1}$, for each $j = 1, 2, \ldots, k-1$.

Next we show that there are no creases in $C_2$ between $a_k$ and $r$, or between between $\ell$ and $a_1$.  This would imply that $Z_2$ is equivalent to $Z_1$, contradicting our assumption that $Z_1 \notin O_2$.  

Assume to the contrary that there exists a crease $w$ between $a_k$ and $r$ in $C_2$.  Because $r$ lies immediately to the right of the creases involved in $Z_2$, we have that $w-a_k = d$ and $r-w>d$ (because $Z_2$ is a crimp). Note however that $w$ is not part of the sequence $\alpha$ in $C_1$, therefore $w$ must have been removed by a crimp $Z_w \in O_1$ prior to $Z_1$. The crimp $Z_w$ cannot occur in $O_2$, because $Z_w$ removes $w$ but not $a_i$, and in $O_2$ the crease $w$ gets removed along with $a_i$ by the crimp $Z_2$. 
This contradicts our choice of $Z_1$ (because $Z_w \notin O_2$ occurs prior to $Z_1$). Similar arguments show that there are no creases between $\ell$ and $a_1$ in $C_2$. 

So the crimp $Z_2 \in O_2$ involves only $a_1, a_2, ..., a_k$ and we have that $Z_1$ is the same as $Z_2$.  This contradicts our assumption that $Z_1 \notin O_2$. 

\vspace{.1in}

\begin{figure}[hptb]
\centerline{\includegraphics[width=.75\textwidth]{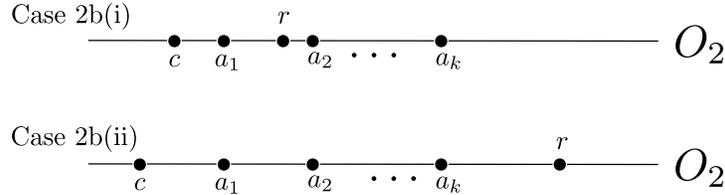}}
\caption{Case 2b, depending on the location of crease $r$ immediately to the right of the creases involved in $Z_2 \in O_2$.}\label{fig4}
\end{figure}

{\bf Case 2b:}  $c\not= a_{i-1}$ and $c\not= a_{i+1}$.
Assume first that $i>1$ and $c\in (a_{i-1}, a_i)$, or $i < k$ and $c\in (a_{i}, a_{i+1})$.  Then in $O_1$ some crimp $Z_c$ prior to $Z_1$ must have removed $c$.  Note that $Z_c \notin O_2$, because $Z_c$ removes $c$ but does not remove $a_i$, and in $O_2$ the crease $c$ gets removed along with $a_i$ by $Z_2$. This contradicts our choice of $Z_1$ as being the first crimp in $O_1$ that does not occur in $O_2$.

The only case left is $a_i=a_1$ and $c$ lies left of $a_1$ (or the symmetric case $a_i=a_k$ and $c$ lies right of $a_k$, which is similar).  
%
%
As in Case 2a, let $r$ be the crease immediately to the right of the creases involved in $Z_2$.  Then $r$ may lie either in the interval $(a_1, a_2]$, or strictly to the right of $a_2$. 

Assume first that $r \in (a_1, a_2]$ (see~\autoref{fig4} top). Then $r-a_1\leq d$, because by~\autoref{lem:crimpgrow} the interval immediately to the right of $a_1$ can only increase or stay the same in size with each crimp, and the limit $d = |a_2-a_1|$ is reached when $a_2$ becomes  adjacent to $a_1$ (without being separated by $r$).  This along with the fact that $Z_2$ is a crimp implies that 
$a_1-c < r-a_1\leq d$.  
This means that $a_1-c < d < a_1-b$, where $b$ is the crease immediately to the left of $a_1$ in $C_1$.  Then $c$ must have been removed in $O_1$ by a crimp $Z'_c$ prior to $Z_1$. Now note that $Z'_c$ eliminates $c$ but does not eliminate $a_1$, and $Z_2$ eliminates both $c$ and $a_1$. This implies that $Z'_c \notin O_2$. contradicting our choice of $Z_1$. 

Assume now that $r$ is strictly to the right of $a_2$ (see~\autoref{fig4} bottom). Arguments identical to those used in case 2a show that $r$ lies strictly to the right of $a_k$, and no creases exist in $C_2$ between $a_j$ and $a_{j+1}$, or between $a_k$ and $r$.  It follows  that $a_1, a_2, ..., a_k$ are all involved in the crimp $Z_2 \in O_2$, which is identical to $Z_1 \in O_1$. This contradicts our assumption that $Z_1 \notin O_2$. 
\end{proof}

\end{document}